\DeclarePairedDelimiter{\ceil}{\lceil}{\rceil}
\algnewcommand{\LineComment}[1]{\Statex \quad \quad \quad \quad \quad \quad \hskip\ALG@thistlm \(\triangleright\) #1}
\algnewcommand{\LineCommentNoIdent}[1]{\Statex \quad \quad \quad  \hskip\ALG@thistlm \(\triangleright\) #1}
\newcommand{\ALGtikzmarkcolor}{black}% customise this, if you want
\newcommand{\ALGtikzmarkextraindent}{4pt}% customise this, if you want
\newcommand{\ALGtikzmarkverticaloffsetstart}{-.5ex}% customise this, if you want
\newcommand{\ALGtikzmarkverticaloffsetend}{-.5ex}% customise this, if you want
\newcounter{ALG@tikzmark@tempcnta}
\newcommand\ALG@tikzmark@start{%
    \global\let\ALG@tikzmark@last\ALG@tikzmark@starttext%
    \expandafter\edef\csname ALG@tikzmark@\theALG@nested\endcsname{\theALG@tikzmark@tempcnta}%
    \tikzmark{ALG@tikzmark@start@\csname ALG@tikzmark@\theALG@nested\endcsname}%
    \addtocounter{ALG@tikzmark@tempcnta}{1}%
}
\def\ALG@tikzmark@starttext{start}
\newcommand\ALG@tikzmark@end{%
    \ifx\ALG@tikzmark@last\ALG@tikzmark@starttext
        % ignore this, the block was opened then closed directly without any other blocks in between (so just a \State basically)
        % don't draw a vertical line here
    \else
        \tikzmark{ALG@tikzmark@end@\csname ALG@tikzmark@\theALG@nested\endcsname}%
        \tikz[overlay,remember picture] \draw[\ALGtikzmarkcolor] let \p{S}=($(pic cs:ALG@tikzmark@start@\csname ALG@tikzmark@\theALG@nested\endcsname)+(\ALGtikzmarkextraindent,\ALGtikzmarkverticaloffsetstart)$), \p{E}=($(pic cs:ALG@tikzmark@end@\csname ALG@tikzmark@\theALG@nested\endcsname)+(\ALGtikzmarkextraindent,\ALGtikzmarkverticaloffsetend)$) in (\x{S},\y{S})--(\x{S},\y{E});%
    \fi
    \gdef\ALG@tikzmark@last{end}%
}
\definecolor{R}{RGB}{0,150,0}
\newcommand{\secondbest}[1]{\textcolor{R}{#1}} 
\apptocmd{\ALG@beginblock}{\ALG@tikzmark@start}{}{\errmessage{failed to patch}}
\pretocmd{\ALG@endblock}{\ALG@tikzmark@end}{}{\errmessage{failed to patch}}
\newcommand{\ie}{\textit{i.e.}}
\newcommand{\eg}{\textit{e.g.}}
\newtheorem{theorem}{Theorem}
\newtheorem{proposition}[theorem]{Proposition}
\begin{document}
	
\let\WriteBookmarks\relax
\def\floatpagepagefraction{1}
\def\textpagefraction{.001}
\shorttitle{The Megopolis Resampler}
\shortauthors{J. Chesser et~al.}

\title [mode = title]{The Megopolis Resampler: Memory Coalesced Resampling on GPUs}
\author[1]{Joshua A. Chesser}[orcid=0000-0002-8936-5537]
\ead{joshua.chesser@adelaide.edu.au}
\address[1]{School of Computer Science, University of Adelaide, Adelaide, SA 5005, Australia }
\author[1]{Hoa Van Nguyen}[orcid=0000-0002-6878-5102]
\ead{hoavan.nguyen@adelaide.edu.au}
\author[1]{Damith C. Ranasinghe}[orcid=0000-0002-2008-9255]
\cormark[1]
\ead{damith.ranasinghe@adelaide.edu.au}
\cortext[cor1]{Corresponding author}

\begin{abstract}
The resampling process employed in widely used methods such as Importance Sampling (IS), with its adaptive extension (AIS), are used to solve challenging problems requiring approximate inference; for example, non-linear, non-Gaussian state estimation problems. However, the re-sampling process can be computationally prohibitive for practical problems with real-time requirements. We consider the problem of developing highly parallelisable resampling algorithms for massively parallel hardware architectures of modern graphics processing units (GPUs) to accomplish real-time performance. We develop a new variant of the Metropolis algorithm---\textit{Megopolis}---that improves performance without requiring a tuning parameter or reducing resampling quality. The \textit{Megopolis} algorithm is built upon exploiting the memory access patterns of modern GPU units to reduce the number of memory transactions without the need for tuning parameters. Extensive numerical experiments on GPU hardware demonstrate that the proposed Megopolis algorithm is numerically stable and outperforms the original Metropolis algorithm and its variants---Metropolis-C1 and Metropolis-C2--in speed and quality metrics. Further, given the absence of open tools in this domain and facilitating fair comparisons in the future and supporting the signal processing community, we also \textit{open source} the complete project, including a repository of source code with \textit{Megopolis} and all other comparison methods.
\end{abstract}

\begin{keywords}
	Resampling  \sep Megopolis \sep Metropolis \sep Particle Filters \sep Sequential Monte Carlo \sep Importance Sampling \sep Adaptive Importance Sampling \sep GPU 
\end{keywords}
\maketitle

\section{Introduction}
A wide range of domains including multi-object tracking~\cite{arulampalam2002a,doucet2001introduction,hoa2019trackerbots}, physics \cite{laubenfels2005feynman}, financial economics~\cite{lopes2011particle},  and statistics~\cite{cappe2006inference} involves problems needing to model nonlinearity and non-Gaussianity to accurately estimate the state of a dynamic system, unknown parameters or functions from noisy measurements. Monte Carlo (MC) methods are statistical sampling-based techniques to solve such challenging or high-dimensional problems numerically~\cite{liu2008monte}. In particular, MC methods use a collection of random samples (so-called particles with associated weights) to approximate probability density functions. One of the common MC methods for sampling from a complex distribution is the Markov chain Monte Carlo (MCMC) method, which relies on constructing a Markov chain based on a stationary (equilibrium) distribution to generate desired samples~\cite{chib1995understanding,hastings1970monte}. 

A crucial alternative to the MCMC method is Importance Sampling (IS)~\cite{glynn1989importance} with its adaptive extension (AIS)~\cite{bugallo2017adaptive,cornuet2012adaptive}. These importance sampling methods are attractive because of their strong theoretical basis, wide applicability and ease of understanding. In essence, importance sampling approximates a probability distribution by: i) drawing samples from a proposal distribution (or the posterior distribution in a Bayesian context);  ii) computing sampled weights based on the difference between proposal distribution and the target distribution. For online and recursive estimation problems, practitioners can employ a sequential Monte Carlo (SMC) method~\cite{del2006sequential,doucet2000sequential}, a recursive generalisation of IS methods. Notably, IS, AIS and SMC methods are more popular because of their parallel implementation capabilities, especially on modern computing hardware. Importantly, at the heart of many of these algorithms is a \textit{resampling} procedure to minimise the particle degeneracy problem ~\cite{robert2004monte}.

Resampling is a process of replicating high weight particles and removing low weight particles while retaining an approximation of the probability density function. The resampling process plays a vital role in IS and SMC methods to prevent the problem of particle degeneracy, a phenomenon where all but few particles have negligible weights~\cite{arulampalam2002a,doucet2000sequential,robert2004monte}. However, resampling is a collective operation over all particle weights and is therefore computationally expensive. Importantly, increasing the number of particles improves estimation accuracy and is an unavoidable necessity for approximating high dimensional distributions in practical problems. Consequently, resampling is a bottleneck impeding the parallelisation of sampling algorithms such as IS, AIS and SMC for practical problems demanding the manipulation of large numbers of particles~\cite{dulger_memory_2018,hendeby_graphics_2007,hendeby_particle_2010}. Therefore, we consider the problem of developing highly parallelisable resampling algorithms for massively parallel hardware architectures of modern graphics processing units (GPUs) to address performance limitations. 

Common resampling algorithms include multinomial, residual, stratified and systematic \cite{bolic2005resampling,douc2005comparison,hol2006resampling,li2015resampling,sileshi_particle_2013}. These methods use the normalised particle weights to compute a particle's offspring, where the number of offspring is the number of times to duplicate that particle. Computing the normalised particle weights involves computing a cumulative sum over the particle weights known as a \textit{prefix sum}. Previous studies have focused on the parallelisation of the prefix sum and the particle selection stage of the algorithms  \cite{gong_parallel_2012,harris_optimizing_nodate,hendeby_graphics_2007,hendeby_particle_2010} to improve performance.  Recently, one notable approach is to utilise the \textit{monotonously} increasing nature of the prefix sum to \textit{parallelise} the systematic and stratified resampling algorithms by reindexing particles using multiple threads~\cite{nicely2019improved}. However, the cumulative sum can lead to numerical instabilities when single-precision representations are used for particle weights, and the number of particles is large. Although using double-precision can alleviate the problem, on contemporary hardware, single-precision performance is significantly faster than double-precision, making single-precision a more desirable choice for production scale algorithms~\cite{noauthor_cuda_nodate}.

Metropolis~\cite{andrieu2010particle} and Rejection resampling algorithms mitigate the issues of the prefix sum methods by avoiding the prefix sum entirely~\cite{Martino2010generalized}. Instead of computing a particle's offspring, each particle finds an ancestor to replace itself in both algorithms. The number of offspring for a particle is determined by the number of other particles that have selected \textit{it} to be their ancestor. To find a particle's ancestor, the Metropolis and Rejection resampling algorithms use the ratio between pairs of particle weights to iteratively search for particles with higher weighting to choose as ancestors.

The Rejection resampler is unbiased but requires an upper bound on the particle weights, and it can take a \textit{variable amount of time} to choose an ancestor for each particle~\cite{Martino2010generalized}. Varying execution times are undesirable as divergent code paths on GPUs have performance implications. We discuss this issue further in Section~\ref{section:gpuProgramming}. The Metropolis resampler is biased but does not require an upper bound on the weights. The time to find an ancestor for each particle is constant; consequently, \textit{Metropolis is a more desirable option for parallel resampling}. Unfortunately, both of these algorithms aiming to mitigate the numerical instability of prefix sum methods are shown to have worse execution time performance in comparisons with other methods ~\cite{murray_gpu_2012, murray2016parallel}. 

Both Rejection and 
Metropolis resamplers involve random memory accesses with significant performance penalties on modern, massively parallel computing hardware such as graphics processing units (GPUs). As we explore in Section \ref{section:coalesce}, the memory access patterns exhibited by an algorithm greatly affect GPU performance due to the memory architecture and the relatively slow memory access times compared to GPU processing speeds. Two techniques developed by Dülger et al. \cite{dulger_memory_2018}---Metropolis-C1 (C1) and Metropolis-C2 (C2)---aim to improve the  Metropolis resampling algorithm  performance on GPU platforms. These techniques recognise the problems with the memory access patterns generated by Metropolis and alter the access pattern of the original Metropolis algorithm to improve performance on GPUs but introduce a tuning parameter that adjusts both the speed and the quality of resampling.  As we discuss in Section~\ref{sec:metropolis_resampler}, the memory access patterns generated by Metropolis-C1 and Metropolis-C2 techniques still inhibit the performance of these methods.

In this article, we present a new algorithm, \textit{Megopolis}, to improve the performance of the Metropolis resampling algorithm by focusing on \textit{designing memory access patterns to exploit memory coalescing on modern GPUs to improve performance}. Importantly, we achieve performance improvements without introducing algorithm tuning parameters or affecting the resampling quality. Further, we prove that the convergence rate for Megopolis is the same as Metropolis and, consequently, achieve the same algorithmic complexity but with the significant performance improvements attained from coalesced memory access patterns. We compare our method to the original Metropolis algorithm and both C1 and C2 in terms of execution time and resampling quality on a GPU platform. We demonstrate that \textit{Megopolis} allows users to adopt the numerically stable Metropolis resampler in applications that demand speed without increasing bias and root-mean-square-error resampling. Further, to support the research community, we also \textit{open source} the complete project comprising a repository of source \footnote{see: \url{https://github.com/AdelaideAuto-IDLab/Megopolis}}.

The paper is organised as follows. Section~\ref{sec:background} provides preliminary background information, including particle filtering, GPU programming and memory access patterns.  Section~\ref{sec:metropolis_resampler} revisits the Metropolis resampler and its C1 and C2 variants. Section~\ref{sec:megopolis} presents our proposed Megopolis resampling algorithm. Section~\ref{sec:experimental_setup}  describes our experimental framework and performance evaluation measures. Section~\ref{sec:resultsDiscussion} presents our experimental results across the bias, mean squared error (MSE), and execution time. Section~\ref{sec:end_to_end_app} details an end-to-end application benchmark. Section~\ref{sec:conclusion} discusses concluding remarks.

\section{Background}~\label{sec:background}
In this section, we provide a brief overview of sequential important resampling (SIR) particle filters also known as bootstrap particle filter (BPF) since we will employ the SMC method in the end-to-end algorithm benchmark in the context of an estimation problem in Section~\ref{sec:end_to_end_app}. SIR filters are a common and easy to understand example of an SMC method for state-space estimation problems involving highly non-linear systems and noisy observations with non-Gaussian noise. 

Although we have used the SIR filter to illustrate the use of and benchmark the proposed resampling algorithm, other particle filter algorithms (\eg, Auxiliary Particle Filter (APF)~\cite{pitt1999filtering}, or Improved APF (IAPF)~\cite{elvira2019elucidating}) or Adaptive Importance Sampling (AIS) algorithms (\eg, Standard Population MC (PMC)~\cite{cappe2004population}, Mixture PMC (M-PMC)~\cite{cappe2008adaptive}, Deterministic Mixture PMC (DM-PMC)~\cite{elvira2017improving}, or Diverse PMC (D-PMC)~\cite{elvira2017population}),  employing resampling algorithms will also benefit from the performance benefits gained from massive parallelisation of the resampling procedure.

Notably, earlier efforts have focused on parallelisation of prediction and update processes of particle filters using techniques such as clustering techniques and dividing the population of particles into a set of sub-population particles~\cite{brun2002parallel,verge2015parallel} (so-called islands) as well as the implementation of distributed particle filters on specialised hardware such as FPGA~\cite{bolic2005resampling} or VLSI~\cite{hong2006design}. Similar to the approaches adopted in~\cite{dulger_memory_2018,nicely2019improved} , we adopted the capabilities of modern GPUs and programming paradigms to implement a SIR (or BPF) particle filter where each step---prediction, update and \textit{resampling}---is parallelised to benchmark the performance gains from the proposed resampling algorithm.  

Further, in the following, we provide an overview of the GPU programming model, coalesced memory access on modern GPUs and illustrate the problem of uncoalesced memory access patterns to help understand the proposed Megopolis algorithm. We begin with an introduction to the notations we have adopted.  

\subsection{Notations}
For notational clarity and simplicity, we use non-bold letters to denote scalar values (\eg, $x,w$), bold letters to denote vectors, \eg, $\boldsymbol{x} = [x^{(0)},\dots,x^{(N-1)}]^T$, and $\\$
$\boldsymbol{w} = [w^{(0)},\dots,w^{(N-1)}]^T$, while  $(\cdot)^T$ denotes the transpose of a vector. A uniform distribution of real numbers within the interval $[a,b)$ is denoted as $\mathcal{U}[a,b]$ while $\mathcal{U}\{a,b\}$ denotes a uniform distribution of integer numbers within the interval $[a,b]$. Further, $\mathcal{N}(\mu,\Sigma)$ denotes a Gaussian distribution with mean $\mu$ and co-variance $\Sigma$; $\mathbb{E}(\cdot)$ denotes the expectation operator. 

\subsection{SIR Particle filters}\label{sec:SIR} 
Particle filters (PF)~\cite{doucet2001introduction,gordon1997hybrid,gordon_novel_1993,ristic_beyond_2003} belong to a class of approximation methods for non-linear systems in the Bayesian filter family. The primary method of a particle filter is to use a random sampling process to approximate the probability distributions of interest~\cite{gordon_novel_1993}. Particle filters implement the random sampling process called the Monte Carlo (MC) method to approximate the belief density by a weighted set of independently and identically distributed (i.i.d) particles. 
The filtering algorithm involves three key processes: \textit{i)}~prediction--the particles are propagated using the system model; \textit{ii)}~update--the particle weights are updated based on noisy observations, and \textit{iii)}~\textit{resampling}---particles with low weights are removed. 

Formally, suppose that $x_{t-1}$ is the state of interest at time $t-1$, which generates an observation $z_{t-1}$ while $z_{1:t-1}$ denotes the measurements history up to time $t-1$. The belief density $p({x}_{t-1}|{z}_{1:t-1}) $ is approximated by a set of particles, $\{(w^{(i)}_{t-1},x^{(i)}_{t-1})\}_{i=0}^{N-1}$, 
where $N$ is the number of particles,  $w^{(i)}_{t-1}$ and $x^{(i)}_{t-1}$ are the weight and state of particle $i$ at time $t-1$ respectively with $\sum\limits_{i=0}^{N-1} w^{(i)}_{t-1} = 1$, and  $\delta(\cdot)$ denotes the Kronecker delta function. 

Each particle is predicted to time $t$ using a dynamic transition model expressed as:
\begin{align} \label{eq:prediction}
	{x}^{(i)}_{t|t-1} = f_{t-1}(x^{(i)}_{t-1},v_{t-1}),
\end{align}
where $f_{t-1}(\cdot,\cdot)$ is a dynamic transition function,  $v_{t-1}$ is the process noise, while the weight $w^{(i)}_{t|t-1} = w^{(i)}_{t-1}$ is maintained during the prediction step. 

In contrast, during the update step, the particle state is maintained, \ie, ${x}^{(i)}_{t} = {x}^{(i)}_{t|t-1}$, while its corresponding weight is updated as
\begin{align}\label{eq:update}
w^{(i)}_t = p(z_t|{x}^{(i)}_{t}) w^{(i)}_{t|t-1},
\end{align}
and then normalised to ensure that $\sum_{i=0}^{N-1}w^{(i)}_t = 1$ before the resampling stage. 

A typical problem with particles filter is particles depletion or degeneracy, \ie, weights are concentrated on a few particles, while the remaining particles have weights close to zero. After multiple update procedures, the reason is that the variance of weights increases and never decreases because the measurement likelihood function is often less scattered than the dynamic transition kernel~\cite{arulampalam2002a,doucet2000sequential,doucet2009tutorial}. 
A well-known method to prevent particle depletion is resampling
~\cite{douc2005comparison,hol2006resampling,sileshi_particle_2013}. The resampling procedure avoids particle degeneracy by pruning particles with small weights while duplicating particles with high weights. 

In this article, we employ the SIR particle filter, also known as the bootstrap resampling filter~\cite{gordon_novel_1993}, shown in Algorithm \ref{alg:sirpf} for an end-to-end application benchmark in Section~\ref{sec:end_to_end_app}. In this algorithm, $\boldsymbol{x}_t$ is the set of particles at time step $t$ before resampling; $\boldsymbol{\bar{x}}_t$ is the set of particles at time step $t$ after resampling.  The SIR particle filter algorithm can be separated into three stages. \textit{Stage 1}:  particles and particle weights are updated using the prediction model and the measurement likelihood model. \textit{Stage 2}: the particle weights are normalised. \textit{Stage 3}: the particles are resampled, generating a new set of particles. Notably, in line $9$, the resampling step approximates the belief density by a set of particles with the equal weights of $1/N$, i.e.,
\begin{align*} \label{eq:resampling_explanation}
    p(x_t|z_{1:t}) \approx \sum_{i=0}^{N-1} w^{(i)}_t \delta(x_t - x^{(i)}_t) \approx \dfrac{1}{N} \sum_{i=0}^{N-1} \delta(x_t - \bar{x}^{(i)}_t). 
\end{align*}

\begin{algorithm}[!tb]
	\footnotesize
	\caption{SIR Particle Filter (Bootstrap Particle Filter)}     \label{alg:sirpf}
	\begin{algorithmic}[1] 
		\Statex \textbf{Input}: $[\boldsymbol{\bar{x}}_{t-1}, z_t]$
		\Statex \textbf{Output}: $\boldsymbol{\bar{x}}_t$
		\LineCommentNoIdent {\texttt{Stage 1: Predict and Update}}
		\For {$i\gets 0$ to $N-1$} 
		    \State $x_t^{(i)} = f_{t-1}(\bar{x}^{(i)}_{t-1},v_{t-1})$ \Comment{prediction using \eqref{eq:prediction}}
		    \State $w^{(i)}_t = p(z_t | x_t^{(i)})$ \Comment{update using \eqref{eq:update} where $w^{(i)}_{t|t-1}=\dfrac{1}{N}$ is omitted.} 
		\EndFor
		\LineCommentNoIdent{ \texttt{Stage 2: Normalise Weights}}
		\State $r = \sum_{i=0}^{N-1} w^{(i)}_t$
		\For {$i\gets 0$ to $N-1$}
		\State $w^{(i)}_t = w^{(i)}_t / r$
		\EndFor
		\LineCommentNoIdent{ \texttt{Stage 3: Resample}}
		\State $\boldsymbol{\bar{x}}_t =$ RESAMPLE($[\boldsymbol{x}_t, \boldsymbol{w}_t]$)
	\end{algorithmic}
\end{algorithm}

\subsection{GPU Programming}
\label{section:gpuProgramming}

Implementing numerical algorithms efficiently on GPUs requires careful consideration of specific architectural features of graphics processing units. We briefly introduce GPU programming and code execution concepts to provide necessary insights into algorithm design decisions and their performance impact on GPUs. We provide a primer on threads, memory, and kernels in the context of GPU in the following.

On a GPU, instructions are executed in a single instruction, multiple threads (SIMT) architectures. In this architecture, threads are executed in groups of 32 parallel threads called \textit{warps}. Each thread in a warp initialises to the same program address but maintains its own instruction counter and register state. A warp will execute one instruction on all threads simultaneously until all threads have finished execution. If a thread's instruction counter differs from the warp instruction, that thread will be inactive for that cycle, reducing the efficiency gains from parallelisation. Consequently, diverging code paths should be avoided in GPU algorithm designs.

Executing code on a GPU generally involves threads fetching data from memory and executing instructions, as in a typical Von Neumann architecture. As expected, memory access times are orders of magnitude slower (involving hundreds of clock cycles) than instruction executions \cite{wong_demystifying_2010} and lead to performance bottlenecks. As a result, the manner in which threads in a warp access memory (their \textit{memory access pattern}) can greatly affect the execution time exhibited by threads in a warp. Consequently, efficient memory access patterns are critical to achieving performance improvements.

Kernels define the code to be executed on the GPU in parallel. The host machine must \textit{launch} a kernel onto the GPU. To launch a kernel, the host must specify the number of threads to launch with two parameters, $blockcount$ and $blocksize$. The total number of threads launched is $blockcount \times blocksize$. Two common ways of launching kernels are the monolithic and grid-stride loop launch. In a monolithic launch, the host launches one thread per datum. In a grid-stride loop launch, any number of threads can be launched, but each thread operates on multiple data. Launching kernels generates an added overhead as the GPU must prepare and schedule threads for execution.

\subsection{Memory access patterns}

\label{section:coalesce}
\begin{figure}[b]
    \begin{subfigure}[t]{\linewidth}
        \centering
        \includegraphics[scale=0.52]{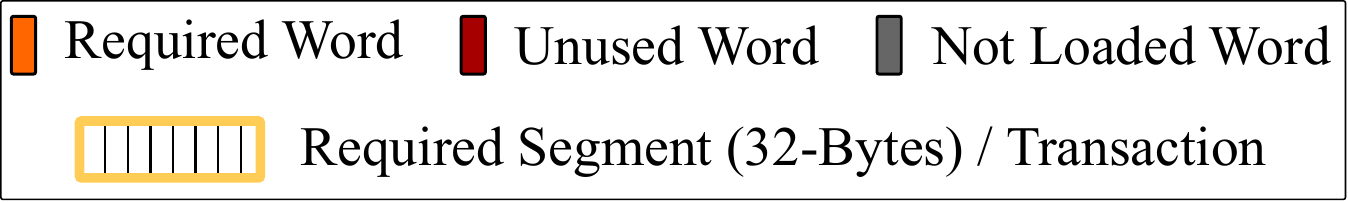}
        \includegraphics[scale=0.52]{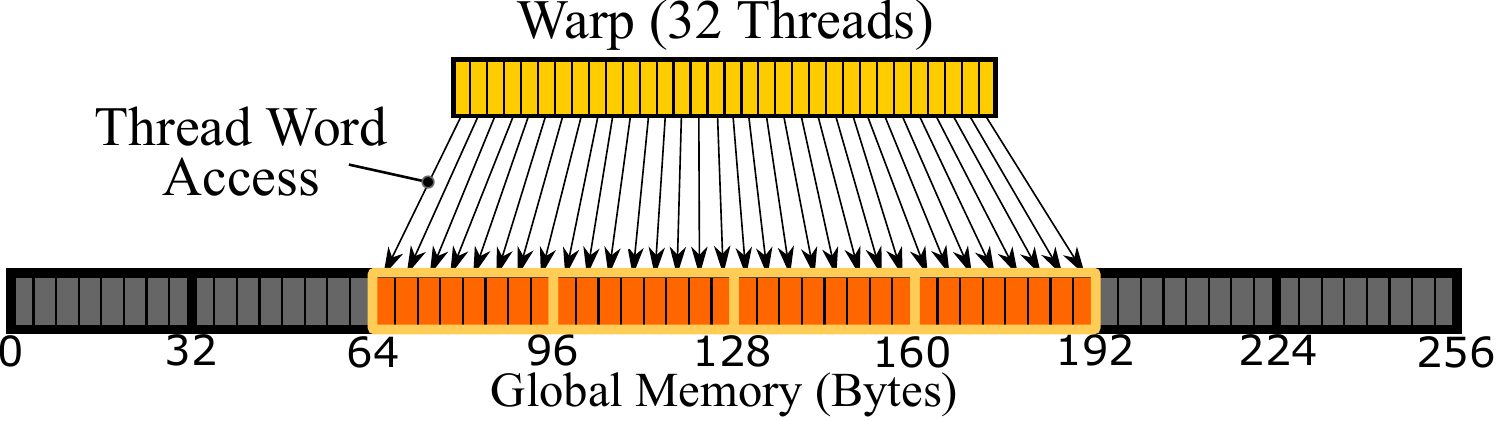}
        
        \caption{Simple access pattern. Requires 4 transactions.}
        \label{fig:simpleMemAccess}
    \end{subfigure}

    \begin{subfigure}[t]{\linewidth}
        \centering
        \includegraphics[scale=0.52]{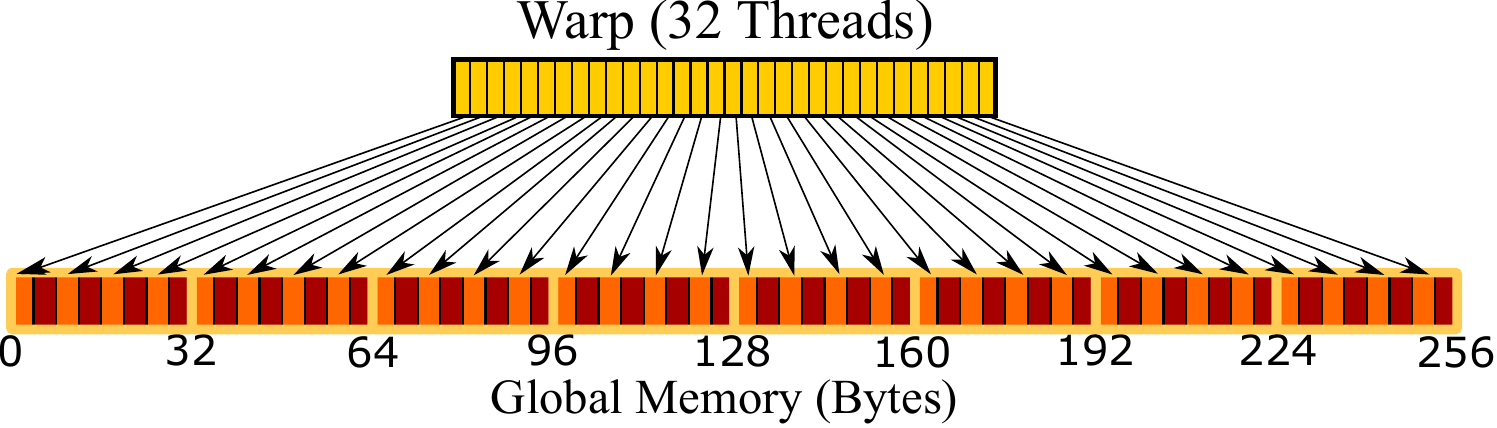}
        \caption{Strided access pattern with an offset of 2. Requires 8 transactions.}
        \label{fig:stridedMemAccess}
    \end{subfigure}

    \caption{An illustration of a warp's 32-byte memory transactions with different memory access patterns. Each thread in the warp points to the memory location of the 4-byte word it is accessing.}
    
    \label{fig:coalescedMem}
\end{figure}

Warps are responsible for servicing the memory accesses of all its owned threads. Warps \textit{transact} with memory in aligned $32$-byte segments, and if multiple threads request data from the same segment, then the warp only needs to fetch one segment to service those threads. Coalescing memory access combines many memory accesses into a single or as few memory transactions as possible. Consequently, \textit{coalesced memory access occurs when threads in a given warp access data that are physically close in the memory address space}~\cite{noauthor_cuda_nodate}. Memory accesses leading to coalesced memory access patterns can significantly improve memory throughput and reduce the execution time of numerical algorithms.

Certain memory access patterns can lead to uncoalesced memory access, and \textit{it is important to consider the resulting memory access patterns when designing an algorithm for GPU execution}. We provide a brief discussion on two contrasting access patterns to highlight the potential impact on performance as well as to aid the description of the proposed resampling algorithm, which can realise coalesced memory access patterns. For simplicity, we consider the scenario of a single warp where the threads are accessing 4-byte words from global memory, where global memory is only 256-bytes in size.

First, we look at the simplest access pattern that achieves high coalescence due to localised thread accesses. Here, thread $k \in \{1\dots32\}$ accesses the $k^{\rm th}$ word in a $32$-byte aligned array starting at byte $64$, illustrated in Fig.~\ref{fig:simpleMemAccess}. In this case, the warp only requires 4 transactions to service all 32 threads, and no word is unnecessarily loaded. 

Second, we look at a stridden access pattern that achieves poor coalescence due to separated thread accesses. Here, given some offset $o$, thread $k \in \{1\dots32\}$ accesses the $(ko)^{\text{th}}$ word in a $32$-byte aligned array starting at byte $0$, illustrated in Fig.~\ref{fig:stridedMemAccess}. With an offset of $o=2$, a warp requires 8 transactions to service all 32 threads, and every second word is unnecessarily loaded in this example. Notably, as $o$ increases, the thread accesses are further separated, and the impact of increased transactions are further exacerbated.

\section{Revisiting the Metropolis Resampler}~\label{sec:metropolis_resampler}

\begin{figure}[t]
    \begin{subfigure}[t]{\linewidth}
        \centering
        \includegraphics[scale=0.52]{figures/legend.pdf}
        \includegraphics[scale=0.52]{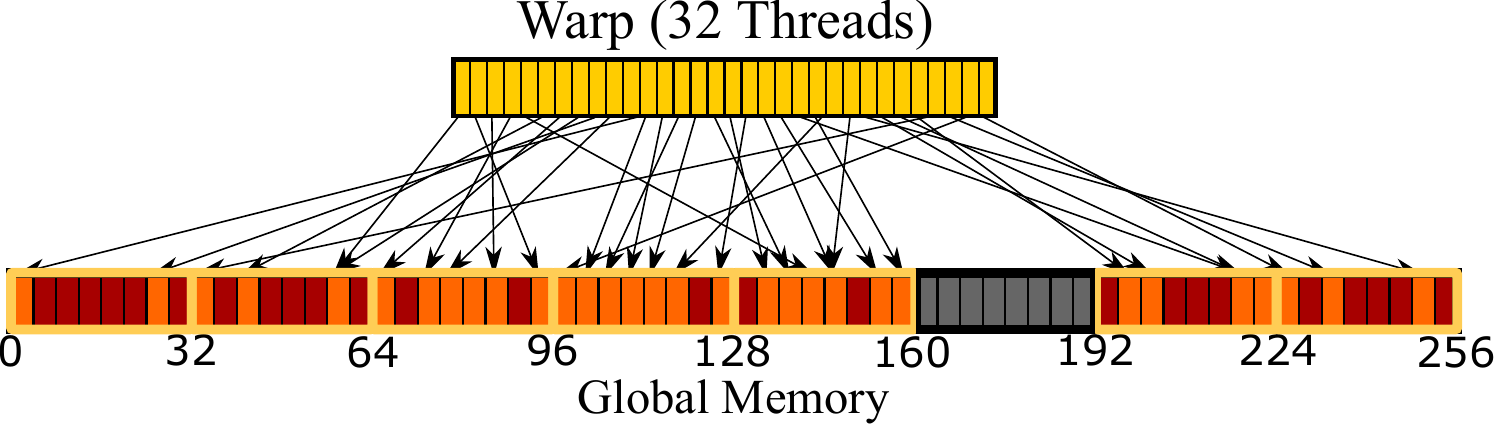}

    \end{subfigure}

    \caption{An illustration of the Metropolis random memory access pattern. The warp requires from 1 to 8 memory transactions when the section of memory is 256-bytes in size.}
    
    \label{fig:randomAccess}
\end{figure}

The Metropolis resampling algorithm designed for parallel execution avoids the common issue of numerical instabilities from computing the prefix sum of weights seen in other resampling methods. Although the algorithm has been analysed by other previous works~\cite{dulger_memory_2018,murray_gpu_2012,murray2016parallel}, we revisit Metropolis and its variants to explain practical and performance issues as well as to aid the explanation of our Megopolis algorithm.

For completeness, the Metropolis algorithm is described in Algorithm~\ref{alg:metroAlg}, and parameter definitions are explained in Table~\ref{tab:parameterList}. Metropolis operates by using the ratios between randomly selected particle weights to iteratively select particles with higher weights to duplicate. It generates a list containing an ancestor for each particle where the ancestor is the index of the particle that will replace the current particle. However, \textit{due to the random selection of particles, the algorithm manifests a random memory access pattern} illustrated in Fig.~\ref{fig:randomAccess}. In this example, thread $k \in \{1\dots32\}$ of a warp accesses the $i^{th}$ word where $i \sim \mathcal{U}\{0,63\}$. Notably, when the number of particles exceeds 256, it is possible for thread memory accesses to be completely separated where each thread in a warp accesses a weight from a unique segment for comparison. This can result in a maximum of 32 memory transactions for a single warp. Consequently, the algorithm implementations on GPUs leads to poor performance due to memory throughput bottlenecks.

Importantly, in the Metropolis algorithm, the convergence of the resampled distribution to the approximated posterior probability density function or the belief density of the system is determined by the number of iterations $B \in \mathbb{N}$ (an integer). Since $B$ is finite in practice, the algorithm always produces a biased sample--- the larger number of iterations leading to lower bias. Hence, $B$ needs to be selected with care, and the selection process of $B$ is described in detail in~\cite{murray2016parallel} and is given by
\begin{equation}
    B \geq \ceil[\Bigg]{\dfrac{\log(\epsilon)} {\log(1 - \dfrac{\mathbb{E}(\boldsymbol{w})}{w^{(p)}})} }
    \label{eq:bcalc}
\end{equation}
where $\ceil[\big]{\cdot}$ is the ceiling operator, $\epsilon$ is an error bound (maximum variation distance to the posterior probability density function) greater than 0; $w^{(p)} = \max(\boldsymbol{w})$ is the maximum of particle weights. In practice, we want to avoid calculating $B$ as it requires a weight summation to compute $\mathbb{E}(\boldsymbol{w})$ and an exhaustive search for $w^{(p)}$. Both the summation and exhaustive search increase execution time and, as discussed previously, performing a sum over the weights can lead to numerical instabilities. Instead, an appropriate value for $B$ can be chosen from some estimate of $\mathbb{E}(\boldsymbol{w}) / w^{(p)}$ either pre-computed experimentally or computed at runtime using a subset of the particles or another estimation method. 

\begin{table}[tb]
\caption{List of parameters used in Metropolis, C1, C2, and Megopolis.}
\label{tab:parameterList}
\centering
\renewcommand{\arraystretch}{1.1}
\begin{tabular}{|c|p{0.65\columnwidth}|}
\hline
\multicolumn{2}{|c|}{\textbf{Common Parameters}} \\
    \hline
    $\boldsymbol{x}_t$ & Particles at time \textit{t} \\
    $\boldsymbol{w}_t$ & Weights at time \textit{t} \\
    $\boldsymbol{\bar{x}}_t$ & Resampled particles at time \textit{t} \\
    $N$ & Number of particles \\
    $B$ & Number of iterations \\
    $i$ & Particle index \\
    $k$ & Ancestor index \\
    $u$ & Uniform random number between 0 and 1 for ancestor selection \\
    $b$ & Inner loop iteration index \\ 
    $j$ & Selected particle index for comparison \\
    \hline
    \multicolumn{2}{|c|}{\textbf{Metropolis-C1/C2 Specific Parameters}} \\ 
    \hline
    $P_{\textrm{size}}$ & Number of bytes in a chosen partition of $\boldsymbol{w}_t$ \\
    $N_{\textrm{part}}$ & Number of $P_{\textrm{size}}$  partitions in $\boldsymbol{w}_t$\\
    $N_{w}$ & Number of single-precision floating-point weights in a partition 
    \\
    $i_{\text{warp}}$ &  Warp index \\
    $p$ &  Selected partition \\
    \hline
    \multicolumn{2}{|c|}{\textbf{Megopolis Specific Parameters}} \\
    \hline
    $\boldsymbol{o}$ & List of random integer offsets \\ 
    $i_{aligned}$ & Starting index of initial aligned segment \\ 
    $o_{aligned}$ & Segment aligned offset \\
    $o_{unaligned}$ & Unaligned offset within segment \\
    \hline
\end{tabular}
\end{table}
\begin{algorithm}[!t]
	\footnotesize
	\caption{Metropolis Resample} \label{alg:metroAlg}
	\begin{algorithmic}[1] 
		\Statex \textbf{Input}: $[\boldsymbol{x}_t, \boldsymbol{w}_t]$
		\Statex \textbf{Output}: $\boldsymbol{\bar{x}}_t$
		\For {$i\gets 0$ to $N-1$} 
		    \State $k \gets i$
		    \For {$b\gets 0$ to $B-1$} 
    		    \State $u \sim \mathcal{U}[0,1]$
    		    \State $j \sim \mathcal{U}\{0, N-1\}$ \Comment{select a particle index for comparison}
    		    \If{$u \leq w_t^{(j)} / w_t^{(k)}$} 
    		        \State $k \gets j$
    		    \EndIf
		\EndFor
		\State $\bar{x}_t^{(i)} = x_t^{(k)}$
	\EndFor
	\end{algorithmic}
\end{algorithm}

The Metropolis resampler can be implemented with a single GPU kernel. The outer loop (lines 1 to 11 in Algorithm~\ref{alg:metroAlg}) can be executed in parallel such that each thread is assigned one particle to find the ancestor for. The particle weights are stored in memory; hence, each comparison requires memory access from a \textit{random location in memory} resulting in the random access pattern discussed earlier. \textit{The generated memory access becomes an issue due to the inability of the GPU to coalesce the random memory accesses as the number of particles increases.}

\subsection{Metropolis-C1 and Metropolis-C2}
\begin{figure}[b]
    \centering
    \includegraphics[scale=0.52]{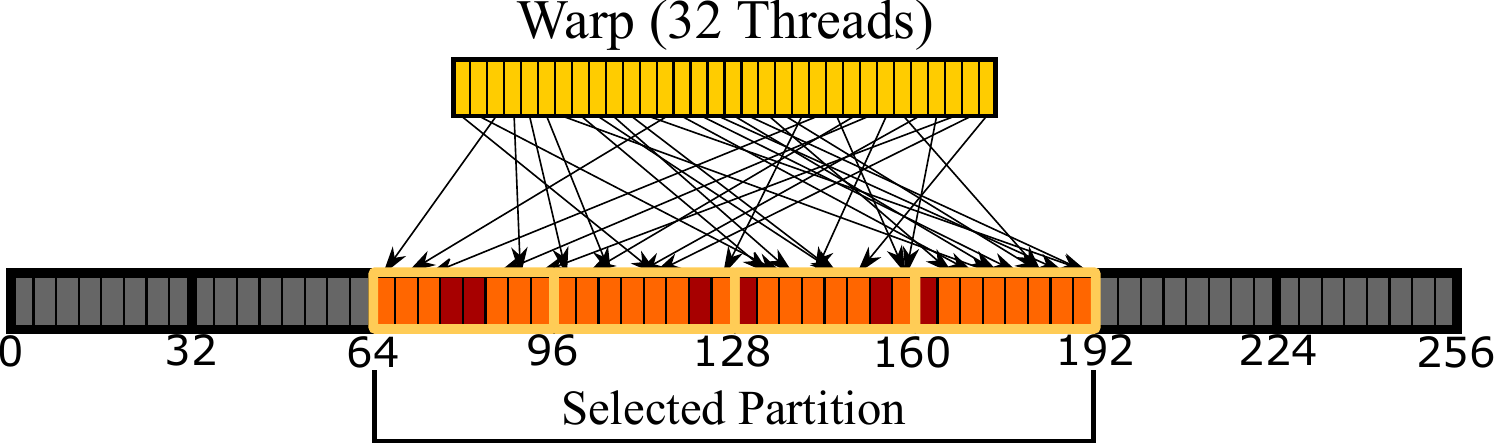}
    \includegraphics[scale=0.52]{figures/legend.pdf}
    \caption{An illustration of the partitioned random memory access pattern used by the C1 and C2 algorithms when the selected partition starts at the 64th byte and is 128 bytes long.}
    \label{fig:segementRandomAccess}
\end{figure}

\begin{algorithm}[!t]
	\footnotesize
	 \caption{Metropolis-C1 Resample} \label{alg:metroC1Alg}
	\begin{algorithmic}[1] 
		\Statex \textbf{Input}: $[\boldsymbol{x}_t, \boldsymbol{w}_t]$
		\Statex \textbf{Output}: $\boldsymbol{\bar{x}}_t$
		\State $N_{\textrm{part}} \gets 4 N / P_{\textrm{size}}$ \Comment{number of partitions} 
        \State $N_{w} \gets P_{\textrm{size}} / 4$ \Comment{number of weights in a partition} 
		\For {$i\gets 0$ to $N-1$} 
		    \State $k \gets i$
		    \State $ i_{\text{warp}}  \gets \lfloor i / 32 \rfloor$ \Comment{compute warp index}
		    \State $p \sim \mathcal{U}\{0,N_{\textrm{part}}-1\} \mid i_{\text{warp}} $ \LineComment{select common partition for threads  in a warp}
		    \For {$b\gets 0$ to $B-1$} 
    		    \State $u \sim \mathcal{U}[0,1]$
    		    \State $j \sim \mathcal{U}\{p \cdot N_{w}, (p + 1) \cdot N_{w} - 1\}$ \LineComment{select a particle index for comparison}
    		    \If{$u \leq w_t^{(j)} / w_t^{(k)}$} 
    		        \State $k \gets j$
    		    \EndIf
		\EndFor
		\State $\bar{x}_t^{(i)} = x_t^{(k)}$ 
	\EndFor
	\end{algorithmic}
\end{algorithm}

\begin{algorithm}[!t]
	\footnotesize
	 \caption{Metropolis-C2 Resample}     \label{alg:metroC2Alg}
	\begin{algorithmic}[1] 
		\Statex \textbf{Input}: $[\boldsymbol{x}_t, \boldsymbol{w}_t]$
		\Statex \textbf{Output}: $\boldsymbol{\bar{x}}_t$
		\State $N_{\textrm{part}} \gets 4 N / P_{\textrm{size}}$ \Comment{number of partitions} 
        \State $N_{w} \gets P_{\textrm{size}} / 4$ \Comment{number of weights in a partitions} 
		\For {$i\gets 0$ to $N-1$} 
		    \State $k \gets i$
		    \State $i_{\text{warp}} \gets \lfloor i / 32 \rfloor$ \Comment{compute warp index}
		    \For {$b\gets 0$ to $B-1$} 
    		    \State $u \sim \mathcal{U}[0,1]$
    		    \State $p \sim \mathcal{U}\{0,N_{\textrm{part}}-1\} \mid i_{\text{warp}} $ \LineComment{select common partition for threads in a warp}
    		    \State $j \sim \mathcal{U}\{p \cdot N_{w},(p + 1) \cdot N_{w} - 1\}$ \LineComment{select a particle index for comparison}
    		    \If{$u \leq w_t^{(j)} / w_t^{(k)}$} 
    		        \State $k \gets j$
    		    \EndIf
		\EndFor
		\State $\bar{x}_t^{(i)} = x_t^{(k)}$
	\EndFor
	\end{algorithmic}
\end{algorithm}

The Metropolis-C1 (simply C1 henceforth) and Metropolis-C2 (C2 henceforth) algorithms are an adaptation of the Metropolis algorithm aiming to improve memory coalescence~\cite{dulger_memory_2018}. We revisit C1 and C2 in Algorithms~\ref{alg:metroC1Alg} and~\ref{alg:metroC2Alg}, respectively. The additional parameters introduced in C1 and C2 are described in Table~\ref{tab:parameterList}. 

Both C1 and C2 algorithms divide the particles into partitions; each partition contains a user-defined number of bytes. Conceptually, the algorithms attempt to force all threads in a given warp to only access memory from one partition at a time (line $6$ in Algorithm~\ref{alg:metroC1Alg} and line $8$ in Algorithm~\ref{alg:metroC2Alg}) to increase the localisation of thread memory accesses. Consequently, the number of global memory transactions required by each thread warp is related to the partition size $P_{\textrm{size}}$ instead of random memory accesses---illustrated in Fig.~\ref{fig:randomAccess}---resulting from the execution of the Metropolis algorithm. The realised partitioned random memory access pattern generated by C1 and C2 is illustrated in Fig.~\ref{fig:segementRandomAccess}. 

The partition approach adopted by C1 and C2 forces a \textit{compromise between speed and bias}. In C1, Algorithm~\ref{alg:metroC1Alg}, each warp selects one partition outside the inner loop (line $6$) and performs Metropolis resampling using only the data within that partition.  Reducing the partition size reduces the number of memory transactions required but restricts the range of data each thread can view when choosing an ancestor. The C2 algorithm---Algorithm~\ref{alg:metroC2Alg}---reduces bias by randomly selecting a new partition at each iteration of the inner loop (line $8$). However, this alteration affects performance because the adopted method forces each thread to generate an extra random number for each algorithm iteration. 

\section{Megopolis}~\label{sec:megopolis}
We propose a variation of the Metropolis algorithm, referred to as the Megopolis algorithm, to improve performance on massively parallel hardware architectures of modern graphics processing units. We aim to improve memory coalescence and obviate the need for a tuning parameter, such as the partition size in C1 and C2, whilst not sacrificing the quality of the results measured by \textit{bias} and \textit{mean squared error}---defined in Section~\ref{sec:experimental_setup}. 

\begin{algorithm}[!t]
	\footnotesize
	\caption{Megopolis Resample}     \label{alg:megoAlg}
	\begin{algorithmic}[1] 
		\Statex \textbf{Input}: $[\boldsymbol{x}_t, \boldsymbol{w}_t]$
		\Statex \textbf{Output}: $\bar{\boldsymbol{x}}_t$
		\State $\boldsymbol{o} \gets [o^{(0)},\dots,o^{(B-1)}]^T$
		
		\For{$b\gets 0$ to $B-1$}
            \State $o^{(b)} \sim \mathcal{U}\{0, N-1\}$
        \EndFor
		\For {$i\gets 0$ to $N-1$} 
		    \State $k \gets i$
		    \State $i_{aligned} \gets i - (i \mod 32)$ 		    \Comment{compute aligned index}
		    \For {$b\gets 0$ to $B-1$} 
    		    \State $o_{aligned} \gets o^{(b)} - (o^{(b)} \mod 32)$    		    \Comment{compute aligned offset}
    		    \State $o_{unaligned} \gets (i + o^{(b)}) \mod 32$     		    \Comment{compute unaligned offset}
    		    \State $j \gets (i_{aligned} + o_{aligned} + o_{unaligned}) \mod N$     		    \LineComment{select a particle index for comparison}
    		    \State $u \sim \mathcal{U}[0,1]$
    		    \If{$u \leq w_t^{(j)} / w_t^{(k)}$} 
    		        \State $k \gets j$
    		    \EndIf
		\EndFor
		\State $\bar{x}_t^{(i)} = x_t^{(k)}$
	\EndFor
	\end{algorithmic}
\end{algorithm}

\begin{figure}[t]
    \begin{subfigure}[t]{\linewidth}
        \centering
        \includegraphics[scale=0.52]{figures/legend.pdf}
        \includegraphics[scale=0.52]{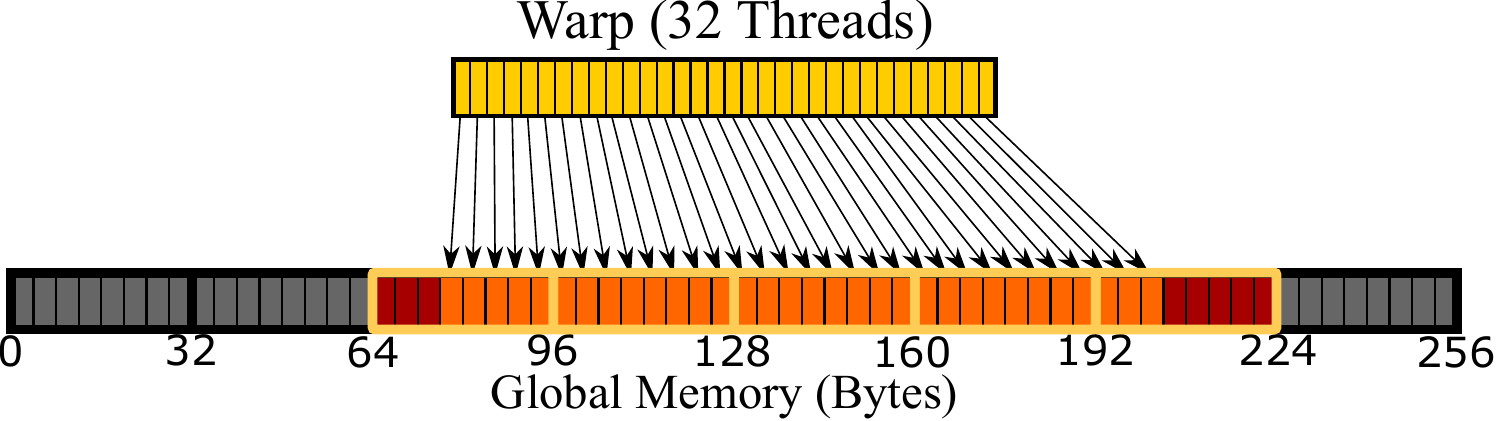}
        \caption{Misaligned but sequential memory access pattern. Requires at most 5 transactions.}
        \label{fig:misalignedAccess}
    \end{subfigure}
    \begin{subfigure}[t]{\linewidth}
        \centering
        \includegraphics[scale=0.52]{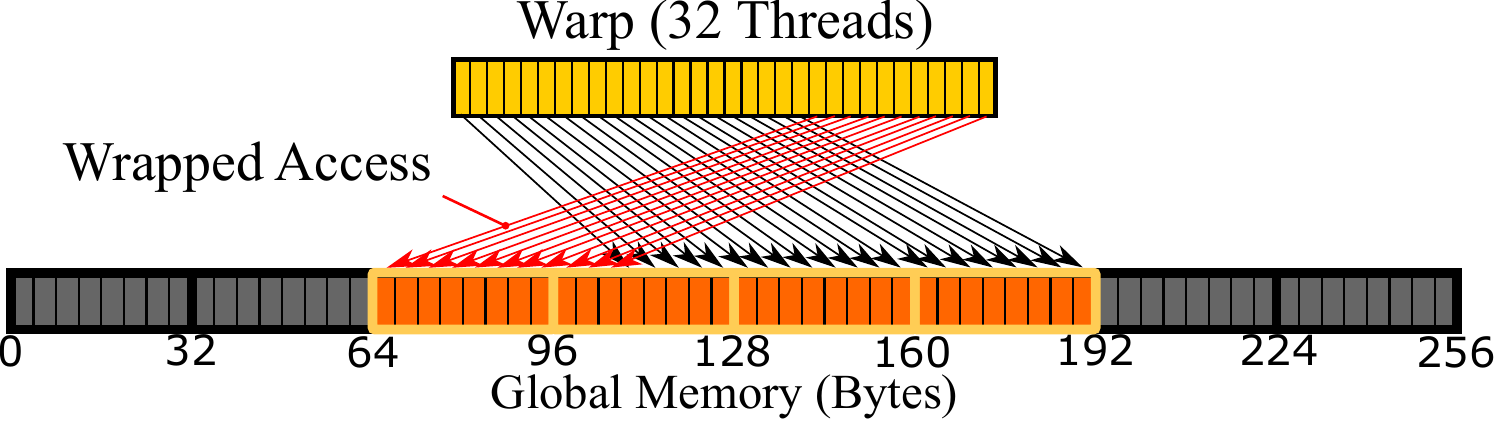}
        \caption{Wrapped sequential memory access pattern. Requires 4 transactions}
        \label{fig:wrappedAccess}
    \end{subfigure}
    \caption{An illustration of the memory access patterns exhibited by the Megopolis algorithm.}
    \label{fig:megopolisAccess}
\end{figure}

Algorithm~\ref{alg:megoAlg} describes Megopolis algorithm. The key concept underlying the construction of the Megopolis resampler is to avoid the need for individual threads to generate random indices for weight comparison and, thereby, remove the consequential generation of the undesirable random memory access patterns to global memory illustrated in Fig.~\ref{fig:randomAccess}. In contrast to previous algorithms, we compute $B$ random integers (lines 1--4) for use as random offset values, necessary for the $B$ comparisons the algorithm must perform (lines 8--16), as a separate task before the resampling process. Subsequently, during parallel execution, each random integer is used as an offset for all threads, simultaneously in the inner loop, to choose the next weight compared to the currently selected ancestor.

The random integer offsets are uniformly sampled from $[0, N-1]$ to ensure that every particle has a uniform likelihood of choosing any other particle for comparison. Here, $\boldsymbol{o}$ is a set of random offsets uniformly sampled from $[0, N-1]$; in contrast to previous algorithms, $j$ now uses the random offsets $o^{(b)} \in \boldsymbol{o}$ instead of generating another new random number to select particles for comparison. Given a set of random integers $\boldsymbol{o} = [o^{(0)},\dots,o^{(B-1)}]^T$, at iteration $i$ of the main loop and iteration $b$ of the inner loop, the idea is to use the value $o^{(b)}$ as an offset to the index of the $i^{\text{th}}$ particle to find the $b^{\text{th}}$ particle index for comparison. This can be achieved by adding the offset to the $i^{\text{th}}$ index and using modulo arithmetic to ensure the resulting index $j \in [0, N-1]$, i.e., $(i + o^{(b)}) \bmod N$. In the context of parallel execution, where the main loop is performed in parallel, this offset formulation would result in every thread in a warp accessing memory directly next to the memory accessed by its neighbours. This would achieve a localised \textit{sequential but misaligned} access pattern as illustrated in Fig.~\ref{fig:misalignedAccess}. However, Fig.~\ref{fig:misalignedAccess} demonstrates that this access pattern can result in an additional unnecessary memory transaction, where a warp of 32 threads requires at most 5 transactions with potentially 8 unnecessary word loads.

To remove the aforementioned unnecessary memory transaction, we force all memory accesses of threads in a warp to be within a memory aligned partition. We choose a partition for each warp, namely the warp partition, to be large enough. Each thread within that warp can access exactly one unique particle, i.e., for a warp with 32 threads, we use a partition size fits exactly 32 particle weights. To force memory accesses to be within this partition, memory accesses that extend past the end of the warp partition are \textit{wrapped} around to the start of the partition. In contrast to the sequential but misaligned formulation, at iteration $i$ of the main loop and iteration $b$ of the inner loop, the particle index for comparison is computed by first finding the starting index of the warp partition and then computing the offset within that partition. The starting index is found by offsetting the aligned index of the $i^{\text{th}}$ particle with the aligned offset from $o^{(b)}$. These aligned components are computed by rounding down to the nearest multiple of the partition size (lines 7 and 9). The offset within the partition is computed by wrapping the sum of $i$ and $o^{(b)}$ to be within the partition size, i.e., $i + o^{(b)} \mod 32$ for a warp with 32 threads (line 10). This achieves a highly localised \textit{wrapped sequential} access pattern illustrated in Fig.~\ref{fig:wrappedAccess}.

When using the wrapped sequential access pattern during parallel execution, all threads within a warp will access memory within an aligned segment of memory while each unique offset still produces a unique index. In the example presented in Fig.~\ref{fig:wrappedAccess}, this formulation results in a memory access pattern where a warp of 32 threads requires exactly four memory transactions with zero unnecessary word loads. This access pattern closely follows the highly efficient sequential memory access pattern illustrated in Fig.~\ref{fig:simpleMemAccess}.

Notably, the proposed Megopolis resampling algorithm requires the small but additional generation and storage of $B$ integers and, therefore, an additional runtime and memory complexity of $\mathcal{O}(B)$ compared to the Metropolis algorithm and its C1 and C2 variants. The generation of these random integers can be performed in parallel to reduce the runtime complexity to $\mathcal{O}(1)$.

\subsection{Selecting Number of Iterations}\label{sec:mego-proof}
In the following, we provide proof to confirm the proposition that the number of iterations $B$ for Metropolis and the proposed Megopolis can indeed be described by equation~\eqref{eq:bcalc} and consequently, bear the same rate of convergence of the resampled particle distribution to the estimated belief density of the system. This is an important result that ensures the number of iterations, hence the time-complexity, to achieve a desirable resampling quality is the same as that of Metropolis.

\begin{proposition}\label{prop:B_of_megopolis} The rate of convergence of Megopolis is the same as Metropolis and is described by the minimum number of iterations $B$ to achieve an error $\epsilon \in (0,1]$ as the maximum variation distance to the posterior probability density function $p(x_t|z_{1:t})$, where:
\begin{align}\label{eq:proposition}
    B = \ceil[\Bigg]{\dfrac{\log(\epsilon)} {\log(1 - \dfrac{\mathbb{E}(\boldsymbol{w})}{w^{(p)}})} }
\end{align}
and $w^{(p)} = \max(\boldsymbol{w})$ is the maximum weight of particles. 

\end{proposition}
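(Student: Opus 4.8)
The plan is to reduce the convergence analysis of Megopolis to the existing analysis of Metropolis by proving that, from the standpoint of any single particle, the two algorithms drive \emph{identical} Markov chains on the ancestor index. I would first recall from \cite{murray2016parallel} that the bound \eqref{eq:bcalc} arises because each particle independently runs a Metropolis--Hastings chain with proposal uniform on $\{0,\dots,N-1\}$ and acceptance ratio $w_t^{(j)}/w_t^{(k)}$, whose stationary law is proportional to the weights; the minimal $B$ depends only on this marginal transition kernel and on $\mathbb{E}(\boldsymbol{w})/w^{(p)}$. Hence it suffices to show that Megopolis reproduces exactly this marginal kernel for each outer index $i$.

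The key step is to show that for fixed $i$ the Megopolis index map $o\mapsto j$ defined by lines 7, 9, 10 and 11 of Algorithm~\ref{alg:megoAlg} is a bijection of $\{0,\dots,N-1\}$. Writing $i = 32a + r$ and $o^{(b)} = 32c + s$ with $r,s\in\{0,\dots,31\}$, the aligned offset is $o_{aligned}=32c$ and the unaligned offset is $o_{unaligned}=(i+o^{(b)})\bmod 32 = (r+s)\bmod 32$, so that $j = \bigl(32(a+c) + ((r+s)\bmod 32)\bigr)\bmod N$. Assuming $32\mid N$ (the standard GPU padding convention, with $N=32M$), the segment component $32(a+c)\bmod N$ is a cyclic shift of $c$ over the $M$ aligned segment starts, the within-segment component $(r+s)\bmod 32$ is a cyclic shift of $s$ over $\{0,\dots,31\}$, and because the latter is always $<32$ the two combine without carry. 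Thus the pair $(c,s)$---equivalently $o^{(b)}$---is in one-to-one correspondence with $j$, establishing the bijection. Since $o^{(b)}\sim\mathcal{U}\{0,N-1\}$ and the map is a bijection, the proposed $j$ is uniform on $\{0,\dots,N-1\}$, matching the Metropolis proposal exactly; and because the offsets $o^{(0)},\dots,o^{(B-1)}$ are drawn independently (lines 1--4), the $B$ proposals seen by a fixed particle are mutually independent. Consequently, marginally, each particle executes precisely the Metropolis chain with the same proposal, acceptance rule, and step count $B$, so the bound \eqref{eq:bcalc} transfers verbatim, yielding \eqref{eq:proposition}.

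I anticipate two obstacles. The first is the structural difference that a single offset $o^{(b)}$ is shared by every thread at iteration $b$, so distinct particles' chains are coupled; I would argue this coupling changes only the joint law of the ancestor vector while leaving every marginal law intact, and since $\epsilon$ in \eqref{eq:proposition} is the maximum total-variation distance of a \emph{single} resampled particle's distribution to the target, the marginal equivalence is exactly what the statement requires. The second, and genuinely delicate, point is the case $32\nmid N$, where the wrap-around can fail to be a bijection and the uniform-proposal property may be lost; I would resolve this either by invoking the GPU padding convention to assume $32\mid N$, or by showing the defect is confined to a single partial segment and bounding its negligible contribution to the variation distance.
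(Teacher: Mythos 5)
Your proposal is correct, and it reaches the paper's conclusion by a genuinely different route. The paper does not perform your reduction: it re-derives the Metropolis convergence recursion from scratch inside the Megopolis setting, tracking $P_B = \textrm{Pr}(x^{(i)} = x^{(p)} \mid B, \bm w)$, the probability that particle $i$ currently holds the maximum-weight particle as ancestor, via the one-step update $P_B = \tfrac{1}{N} + P_{B-1}\bigl(1 - \mathbb{E}(\boldsymbol{w})/w^{(p)}\bigr)$, and then solves this with $P_0 = 0$ to extract \eqref{eq:proposition}. The only Megopolis-specific input to that derivation is the assertion---stated in a single sentence, without proof---that because all threads share the offset $o^{(b)}$, the comparison index $j$ in line 9 of Algorithm~\ref{alg:megoAlg} is uniform on $\{0,\dots,N-1\}$. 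Your bijection lemma (writing $i = 32a+r$, $o^{(b)} = 32c+s$, and observing the carry-free combination of a cyclic shift over segment starts with a cyclic shift within a segment) is precisely the missing justification of that sentence, so you are strictly more rigorous on the one point where the two arguments must agree; you then invoke the Metropolis analysis of \cite{murray2016parallel} as a black box instead of re-deriving it. Your route buys two further things the paper leaves implicit: the explicit observation that shared offsets couple the particles' chains while leaving each marginal chain exactly Metropolis (which is all the total-variation statement about a single particle requires), and the caveat that exact uniformity of $j$ needs $32 \mid N$---the paper silently assumes this (its experiments use powers of two from $2^6$ up), whereas you correctly flag it and sketch a repair via padding or a bounded-defect argument. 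What the paper's direct re-derivation buys instead is self-containedness: it exhibits the recursion and its closed-form solution rather than citing the prior analysis, a purely expositional advantage since the recursion is identical once uniformity and independence of the proposals are granted, both of which your argument establishes.
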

\begin{proof}
Since the rate of convergence is described by the number of iterations $B$, 
we aim to show that the number of iterations $B$ for the Megopolis algorithm is the same as for the Metropolis algorithm. As explained in~\cite{murray2016parallel}, the convergence rate to the target probability distribution $p(x_t|z_{1:t})$ 
is predominantly dependent on the particle with the highest weight. If we define $x^{(p)}$ as the particle with the maximum weight $w^{(p)}$ 
then we want to estimate the value of $B$ which satisfies the probability that the particle $x^{(i)},~i \in \{0,\dots,N-1\}$ will choose $x^{(p)}$ as its ancestor given $B$ and $\boldsymbol{w}$, with a desired error $\epsilon \in (0,1]$,~\ie,
\begin{equation} \label{eq:B_condition}
    \textrm{Pr}(x^{(i)} = x^{(p)} | B, \bm w) \geq \dfrac{w^{(p)}}{\sum_{i=0}^{N-1} w^{(i)} } - \epsilon. 
\end{equation}

For compactness, let $P_{B} = \textrm{Pr}(x^{(i)} = x^{(p)} | B, \bm w)$. We investigate how $P_{B}$ changes over the iterations, by considering two aspects: i) the probability of selecting $x^{(p)}$ for the particles that have \textit{not} selected $x^{(p)}$ at iteration $B-1$, ii) the probability of not selecting $x^{(p)}$ for particles that have \textit{already selected} $x^{(p)}$ at iteration $B-1$. Note that since all particles are given the same offset value $o^{(b)}$ at each iteration, the probability of selecting any particle index in line $9$ of Algorithm~\ref{alg:megoAlg} is $\dfrac{1}{N}$.

\begin{itemize}
    \item[i)] Since the ratio $w^{(p)} / w^{(j)} \geq 1 ~\forall j \in {0,\dots,N-1}$, the probability of selecting $x^{(p)}$ at iteration $B$ is the probability that a particle that has not already selected $x^{(p)}$ at iteration $B-1$ chooses $x^{(p)}$ for comparison at iteration $B$, \ie, 
    \begin{align}
        (1-P_{B-1}) \cdot \dfrac{1}{N}
    \end{align}
    \item[ii)] The probability of deselecting $x^{(p)}$ at iteration $B$ for particles that have \textit{already selected} $x^{(p)}$ at iteration $B-1$ is 
    \begin{align}
        \sum_{i=0,i\not =p}^{N-1}\dfrac{w^{(i)}}{w^{(p)}} \cdot \dfrac{1}{N} \cdot P_{B-1}= \big[\dfrac{\mathbb{E}(\boldsymbol{w})}{w^{(p)}} - \dfrac{1}{N} \big] P_{B-1}
    \end{align}
\end{itemize}

Thus, at iteration $B$, $P_B$ can be computed as followed:
\begin{align} 
    P_B &= P_{B-1} + (1-P_{B-1}) \cdot \dfrac{1}{N} - \big[\dfrac{\mathbb{E}(\boldsymbol{w})}{w^{(p)}} - \dfrac{1}{N} \big] P_{B-1} \notag \\
    &= \dfrac{1}{N} + P_{B-1} (1 - \dfrac{\mathbb{E}(\boldsymbol{w})}{w^{(p)}} ). \label{eq:pkp1}
\end{align}

For a given $P_0$, we can write \eqref{eq:pkp1} in terms of $B$ as follows:

\begin{equation}\label{eq:P_{B-1}p1}
    P_{B} = (1 - \dfrac{\mathbb{E}(\boldsymbol{w})}{w^{(p)}} )^{B} P_0 + \dfrac{1}{N} \sum_{i=0}^{B-1} (1 - \dfrac{\mathbb{E}(\boldsymbol{w})}{w^{(p)}})^i
\end{equation}

By setting $P_0 = 0$ and  

$P_B = w^{(p)} / [\sum_{i=0}^{N-1} w^{(i)}] \times (1 - \epsilon) = (1-\epsilon)/(N \dfrac{\mathbb{E}(\boldsymbol{w})}{w^{(p)}})$ which satisfies \eqref{eq:B_condition}, we have:

\begin{align}
    &\dfrac{(1-\epsilon)}{N \dfrac{\mathbb{E}(\boldsymbol{w})}{w^{(p)}}} = \dfrac{1}{N} \cdot \dfrac{(1-\dfrac{\mathbb{E}(\boldsymbol{w})}{w^{(p)}})^B - 1}{(1-\dfrac{\mathbb{E}(\boldsymbol{w})}{w^{(p)}}) -1}  \\
    &\Leftrightarrow (1-\dfrac{\mathbb{E}(\boldsymbol{w})}{w^{(p)}})^B  = \epsilon. 
\end{align}

Thus, it suffices to select $B = \ceil[\Bigg]{\dfrac{\log(\epsilon)} {\log(1 - \dfrac{\mathbb{E}(\boldsymbol{w})}{w^{(p)}})} }$ as the minimum number of iterations to satisfy condition  \eqref{eq:B_condition}. Hence the minimum number of iterations $B$ described by \eqref{eq:bcalc} for the Metropolis algorithm is identical to that for the proposed Megopolis algorithm. Consequently, the rate of convergence of Megopolis is the same as Metropolis.

\end{proof}
\section{Experiments}~\label{sec:experimental_setup}
We detail the settings employed in the extensive numerical experimental regime and introduce the performance measures adopted to assess the performance of resampling algorithms. 

\vspace{2mm}
\noindent\textbf{Numerical Experimental Settings.~}To obtain the weight sequences $\bm w$ for the experiments, we use two weight-generation methods. 

For the first method, we follow the weight generation method proposed in~\cite{murray2016parallel}. We use this method to evaluate the performance of the algorithms as the particle weighting becomes concentrated on a few particles, a typical scenario demanding a resampling process. The weights are generated as follows: 

\begin{equation}
    w^{(i)} = \dfrac{1}{\sqrt{2 \pi}} \exp{\big(\dfrac{-1}{2} (x^{(i)} - y)^2\big)},
    \label{eq:distribution}
\end{equation}
where $x^{(i)} \sim \mathcal{N}(0,1)$ (drawn from a zero-mean Gaussian distribution with a co-variance $\Sigma=1$) and $y$ controls the weight distribution. As $y$ increases, the coefficient of variation (CV) $\sigma / \mu$ increases where $\sigma$ and $\mu$ are the standard deviation and mean of the weights, respectively. Consequently, a larger CV corresponds to few particles holding the majority of the weight. We generate weights using 5 different values for $y$ as illustrated in the distributions in Fig.~\ref{fig:murrayDist}. From Fig.~\ref{fig:murrayDist}, we can observe that as $y$ increases, most particles will have weights in the 0 and 1 weight bins, with few particles in the higher weight bins, simulating the particle degeneracy problem. 

For the second method, we adopt a similar approach in~\cite{dulger_memory_2018} used for evaluating resampling algorithms by sampling from the gamma distribution. We use the gamma distribution to evaluate the algorithms on a rich variety of particle weight distributions. The gamma distribution is defined by the following probability density function: 
\begin{equation}
    p(x \mid \alpha, \beta)  = \dfrac{\beta^{\alpha} x^{\alpha - 1} e^{-\beta x}}{\Gamma(\alpha)} \forall x, \alpha, \beta > 0,
    \label{eq:gammaEq}
\end{equation}
where $\alpha$ is the shape parameter for the gamma distribution $\Gamma(\alpha)$ and $\beta$ is the rate parameters of the gamma distribution. The shape and rate can be adjusted to obtain different distributions. We generate 5 different distributions by setting $\beta$ to 1 and selecting 5 values for $\alpha$: $0.5, 2.0, 3.0, 10.0, \text{and~} 50.0$. The resulting weight distributions are shown in Fig. \ref{fig:gammaDist}.

\begin{figure}[h]
    \centering
    \begin{subfigure}[t]{\linewidth}
        \centering
        \includegraphics[width=0.8\linewidth]{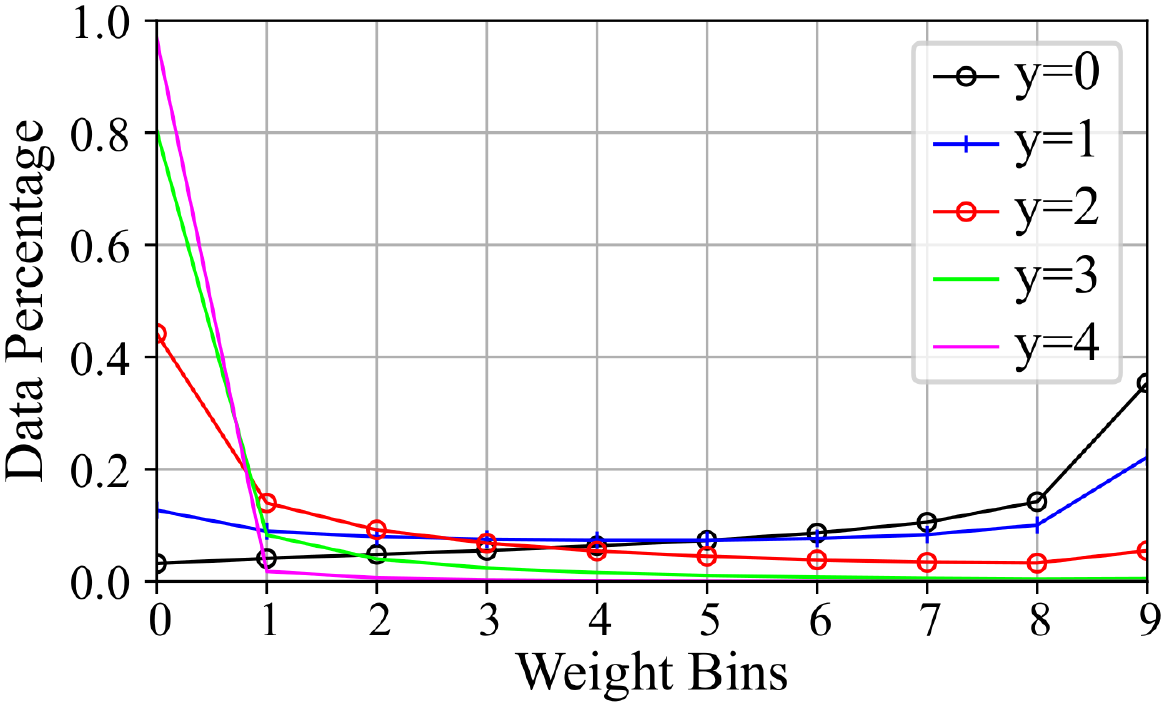}
        \caption{Weight distribution of a weight sequence generated from \eqref{eq:distribution}.}
        \label{fig:murrayDist}
    \end{subfigure}
    \par\bigskip 
    \begin{subfigure}[t]{\linewidth}
        \centering
        \includegraphics[width=0.8\linewidth]{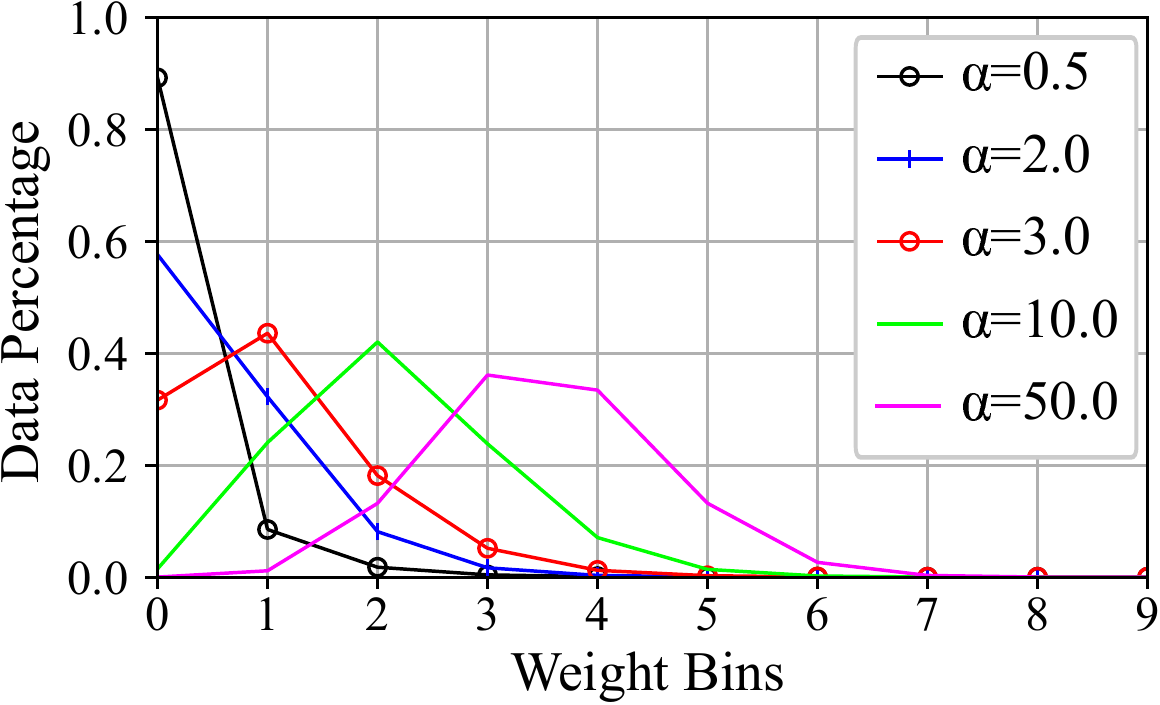}
        \caption{Weight distribution of a weight sequence generated from the gamma distribution.}
        \label{fig:gammaDist}
    \end{subfigure}
    
    \caption{Illustrations of the distribution of weights in weight sequences of size $N = 4,194,304$, generated using  \eqref{eq:distribution} and by sampling from the gamma distribution. The X-axis represents the bins of the weights, and the Y-axis represents the percentage of particles with weights in a given bin.}
    \label{fig:weightGen}
\end{figure}

\vspace{2mm}
\noindent\textbf{Algorithm Implementation Details.~}We implement each algorithm using the Compute Unified Device Architecture (CUDA) C++ platform on a Tesla K40m GPU. All resampling GPU kernels are implemented with a grid-stride loop to accommodate a varying number of particles easily. With the exception of the improved systematic kernels, they are launched with a 512 $blockcount$ and a 256 $blocksize$. The improved systematic kernels are launched with a 512 $blockcount$ and a 64 $blocksize$ to utilise the shared memory optimisation~\cite{nicely2019improved}. When a prefix sum is required for resampling, we employ the efficient parallel prefix sum algorithms provided by the \texttt{Thrust}  library~\cite{noauthor_cuda_nodate}. We use \texttt{XORWOW PRNG} pseudo-random number generator (PRNG) in the \texttt{CURAND} library for all random number generations in the kernels. The PRNGs are only initialised once outside of the resampling kernels, and the states are saved in global memory. For \textit{all} algorithms, when we are required to generate random numbers, the PRNG states are loaded from memory in a coalesced way and used to generate the random numbers in parallel. When a kernel has finished generating random numbers, the PRNG state is stored back into global memory. The initialisation time of the PRNGs is excluded from execution time; however, the loading and storing of the PRNG states are included.  

\subsection{Performance Evaluation Measures}
Resampling algorithms are often assessed empirically for mean squared error (MSE), bias, and execution time \cite{murray2016parallel, dulger_memory_2018}. 

\vspace{2mm}
\noindent\textbf{Mean Square Error (MSE).~}To determine the squared error (SE) we compare the offspring vector $\bm o_k$, where $o_k^{(i)}$ is the number of offspring of the $i^{\text{th}}$ particle, to the expected offspring computed from the weight vector $\bm w$ as described in~\cite{murray2016parallel} using:
\begin{equation}
\text{SE}(\bm o_k) = \sum_{i=0}^{N-1}\big(o_k^{(i)} - \dfrac{Nw^{(i)}} {\sum_{j=0}^{N-1}w^{(j)}}\big)^2.
\end{equation}

The MSE is then calculated from $K$ offspring vectors {$\bm o_{1},...,\bm o_{K}$} by taking the sample mean of squared errors:
\begin{equation}
\text{MSE}(\bm o) = \dfrac{1}{K} \sum_{k=1}^{K} \text{SE}(\bm o_k).
\end{equation}

\vspace{2mm}
\noindent\textbf{Bias.~}Notably, the MSE can be rewritten as a sum of two separate components, bias and variance highlighted in~\cite{murray2016parallel}:
\begin{equation}
\text{MSE}(\bm o) = \text{Var}(\bm o) + ||\text{Bias}(\bm o)||^2.
\end{equation}
Here, 
\begin{align}
    \text{Var}(\bm o) &= \sum_{i=0}^{N-1} \text{Var}(o^{(i)}), \\
    ||\text{Bias}(\bm o)||^2 &= \sum_{i=0}^{N-1}\big(\hat o^{(i)} - \dfrac{N \hat{w}^{(i)}}{\sum_{j=0}^{N-1}w^{(j)}}\big)^2
\end{align}
where,

\begin{align}
    \hat o^{(i)} &= \dfrac{1}{K} \sum_{k=0}^{K-1} o_k^{(i)}, \\
    \text{Var}(o^{(i)}) &= \dfrac{1}{K - 1} \sum_{k=0}^{K-1}  (o_k^{(i)} - \hat o^{(i)})^2
\end{align}

$\hat o^{(i)}$ is the sample mean of the offspring of the $i^{\text{th}}$ particle in the $K$ offspring sequences. $\text{Var}(o^{(i)})$ is the sample variance in offspring of the $i^{\text{th}}$ particle in the $K$ offspring sequences. The MSE results are normalised by the number of particles, $\text{MSE}(\bm o)/N$. Then the contribution of the squared bias to the mean squared error given in~\eqref{eq:bias-contrib} is used to assess the bias of the resampling algorithms.

\begin{align}\label{eq:bias-contrib}
\text{Bias contribution}=||\text{Bias}(\bm o)||^2/\text{MSE}(\bm o) 
\end{align}

\vspace{2mm}
\noindent\textbf{Execution Time.~}To assess the speed of each resampling algorithm, we calculate the execution time of the kernels. We compute $B$ using ~\eqref{eq:bcalc} with $\epsilon = 0.01$ and ignore this computation time as it is the same for all of the algorithms, and, in practice, it can be chosen from some estimate of the weight distribution.

For each resampling algorithm, we consider the performance when the number of particles is increased from $2^6$ to $2^{22}$. We generate 16 unique weight sequences from both weight generation methods for each choice of the number of particles. Each algorithm under test performs 256 Monte Carlo runs ($K=256$) on each weight sequence, generating an offspring vector each time. The experimental results for a given weight distribution are the average across the 256 repeated assessments of the 16 generated weight sequences.

\begin{figure*}
    \centering
    \includegraphics[width=0.9\textwidth]{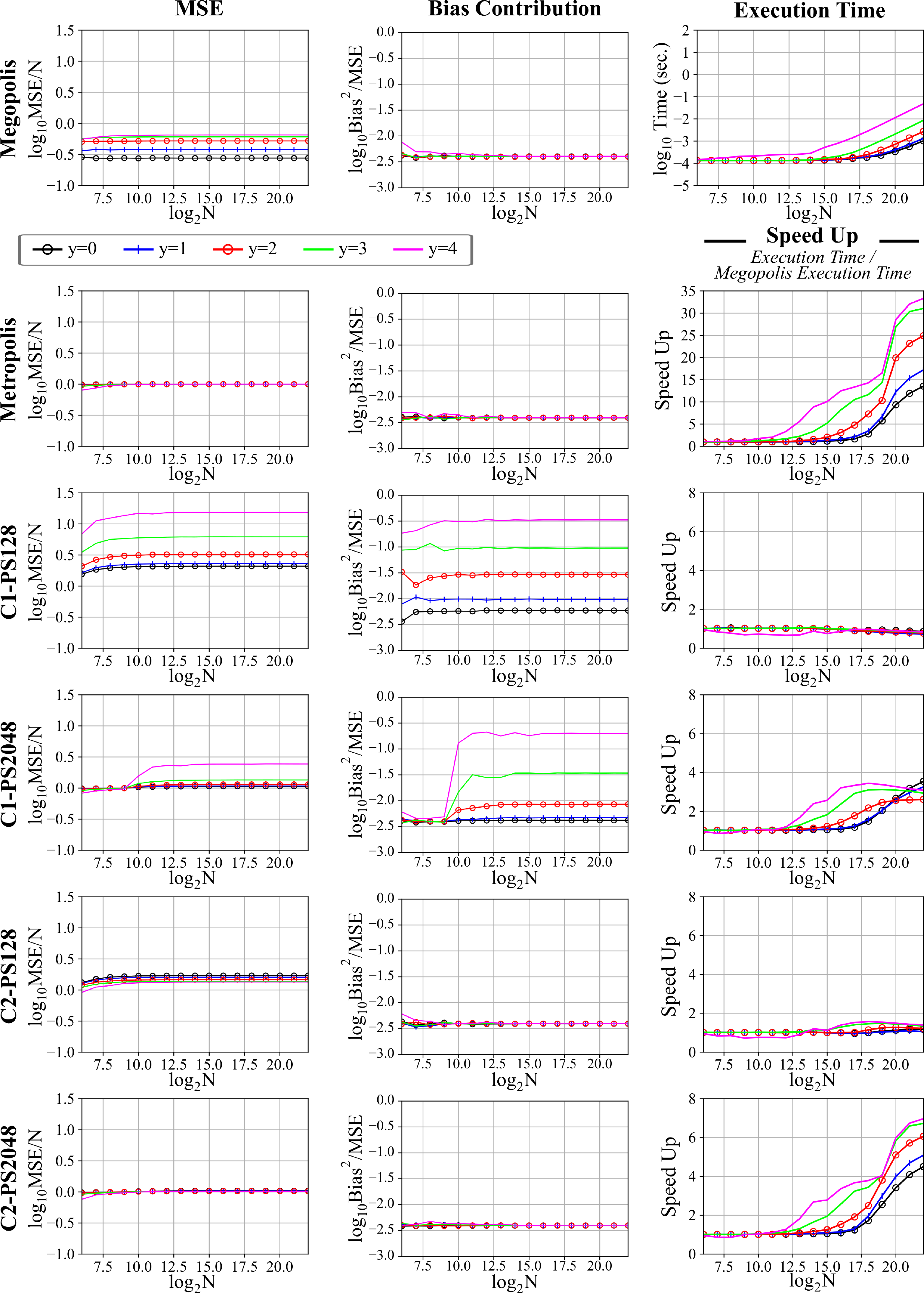}
    \caption{Comparison of experimental results of Megopolis, Metropolis, C1-PS128, C1-PS2048, C2-PS128, and C2-PS2048, using the distribution from  \eqref{eq:distribution} to generate weights. The speedup graphs directly compare the execution time of a given method with Megopolis where $\text{Speed~Up} = Execution Time / Megopolis~ Execution~Time$.}
    \label{fig:rvResults}
\end{figure*}

\section{Results and Discussion}\label{sec:resultsDiscussion}
This section compares the Megopolis algorithm with Metropolis and C1 and C2 with partition sizes 128 and 2048 bytes. For the sake of brevity, C1-PS128, C1-PS2048, C2-PS128, C2-PS2048 will be used to discuss C1 and C2 with the respective partition sizes. We compare the algorithms using the weight distribution described in~\eqref{eq:distribution} to generate weight sequences. The results are summarised in Fig.~\ref{fig:rvResults}. We also compare the algorithms with weight sequences generated by sampling from the gamma distribution described by~\eqref{eq:gammaEq}; however, as the results are very similar to those observed with the previous weight-generation method, we defer these results to \textbf{Appendix}~\ref{appendix:a}. 
To provide a baseline for the expected MSE and bias, we also compare Megopolis to the unbiased parallel resampling methods that require a prefix sum, the parallel multinomial algorithm, \cite{murray_gpu_2012}, and the improved parallel systematic method \cite{nicely2019improved}. These algorithms are provided in \textbf{Appendix}~\ref{appendix:unbiased} with a discussion in Section~\ref{sec:unbiasedDiscussion} and results summarised in Fig.~\ref{fig:unbiasedComparison}.

\subsection{MSE and Bias}\label{sec:mse-bias-results} 
The MSE and bias contribution results indicate that the Megopolis algorithm produces the highest quality results compared to Metropolis, C1, and C2. It produces a lower MSE than these algorithms without increasing the bias contribution to MSE. To understand the rationale for the lower MSE, we analyse the bias and variance of the Megopolis results separately. 

The bias contribution results in Fig.~\ref{fig:rvResults} demonstrate that the bias contribution for Megopolis is the same as both Metropolis and C2 but significantly lower than C1. We observe no increase in the bias for Megopolis due to the behaviour exhibited by an individual particle during ancestor selection. For an individual particle, at each iteration of the inner loop, the particle uses a random offset from $0$ to $N-1$ to select a particle for comparison with its current ancestor. As such, this choice can be any particle from the entire set of particles with equal probability, resulting in negligible bias values. Given that the MSE is composed of variance and bias, the lower MSE for Megopolis indicates that Megopolis produces less variance in the particle offspring across repeated resamples of the same distribution.

The lower variance in particle offspring for Megopolis is attributed to selecting a weight for comparison with the current ancestor. In Megopolis, each particle is exposed exactly once for weight comparison on $B$ occasions due to the global offsets used. As such, the maximum number of offspring a particle can have in a given resampling process is $B$; hence, the range of possible offspring a particle can have $0$ to $B$ for Megopolis. In contrast, in Metropolis, C1 and C2, each particle generates its own random index at each iteration; hence, for a single iteration, a single particle may be selected as an ancestor up to $N$ times while others may not be selected at all. Therefore, the range of possible offspring a particle can have is $0$ to $N$ for Metropolis, C1, and C2. As $B$ is generally significantly smaller than $N$, the range of possible offspring a particle can have is smaller in Megopolis than Metropolis, C1, and C2. Consequently, Megopolis produces significantly lower variance in particle offspring.

\subsection{Execution Time}
We evaluate and report the execution time of the proposed Megopolis and the relative speedup obtained with respect to other algorithms. The results demonstrate that Megopolis is significantly faster than Metropolis and C1 and C2 implementations with large partition sizes (2048). The execution time improvements become more prevalent when the number of particles exceeds $2^{14}$; at this point, the contribution of the kernel launch time to the total execution time becomes less significant.  Overall, as expected, Megopolis using the wrapped sequential memory access pattern to coalesce memory accesses attains significant performance improvements when the number of particles required is large.

Interestingly, small partition (PS128) implementations of C1 and C2 have similar execution time performance to Megopolis, with C1-PS128 marginally faster than Megopolis. The comparable execution times attained can be attributed to the similar number of memory transactions that Megopolis, C1-PS128, and C2-PS128 require. Megopolis requires 4 transactions at each iteration of the inner loop as it achieves the wrapped sequential memory access. The impact of the smaller partition sizes leads C1-PS128 and C2-PS128 to require at most 4 transactions. The marginal speedup of C1-PS128 over Megopolis can be attributed to C1-PS128 having improved cache utilisation as each thread warp uses the same partition for all iterations. Using the same partition at each iteration allows for better cache utilisation when each partition is small enough to fit entirely within the cache and, thus, reducing the need to access slower global memory. However, the speedup is marginal since C1 still requires generating two random numbers per particle (lines 8 and 9) for each iteration $B$ compared to the single random number generation needed in Megopolis (line 12). 

Despite requiring only at most 4 transactions in C2-PS128, Megopolis gains a small speedup over C2-PS128. This can be attributed to the behaviour C2 exhibits as the number of iterations $B$ increases. As $B$ increases, C2 must generate more random numbers than Megopolis for each iteration (three numbers per particle compared to the single number in Megopolis), resulting in longer execution times. We observe this effect in the speedup graph of C2-PS128 as $y$ increases. This is because increasing $y$ causes the number of iterations, $B$, to increase. The relationship between $B$ and $y$ is discussed in Section.~\ref{sec:effectOfWeightDist}.

Notably, despite the comparable execution times, the use of small partition sizes with C1 (C1-PS128) leads to significantly worse resampling quality results. This is indicated by C1-PS128 and C2-PS128 attaining a larger MSE compared to Megopolis for all $y$ weight distribution parameters (as we discussed in Section~\ref{sec:mse-bias-results}) and further confirmed through our results from the gamma distributions (in \textbf{Appendix}~\ref{appendix:a}).

In contrast, for larger partition sizes (e.g. PS2048), the number of memory transactions required by a warp for C1 and C2 increases to at most 32 (one per thread in a warp) because of the increased particle selection range. Megopolis achieves a significant speedup due to requiring only $4$ memory transactions per warp. 
\subsection{Effect of Weight Distribution Variations}
\label{sec:effectOfWeightDist}
Comparing the performance of the algorithms as the $y$ parameter increases (increasing coefficient of variation and altering the weight distribution) provides useful insight into the behaviour of each algorithm. Megopolis attains consistently low MSE and bias contribution results across all tested particle distributions; further confirmed through our results from the gamma distributions in \textbf{Appendix}~\ref{appendix:a}.

It is evident from Fig. \ref{fig:rvResults} that the MSE, bias contribution, and execution time of all the algorithms can be impacted by an increasing $y$ (an increasing concentration of weights on a few particles). Importantly, consider the effect an increasing $y$ has on the number of iterations $B$. From \eqref{eq:distribution}, we have $w^{(p)} = 1/\sqrt{2\pi}$ and  $\mathbb{E}(\boldsymbol{w}) = \exp{(-y^2/4)}/\sqrt{4\pi}$~\cite{murray2016parallel}. Thus, an increasing $y$ reduces the expected weight $\mathbb{E}(\boldsymbol{w})$. Additionally, since the maximum weight $w^{(p)}$ is a constant, based on \eqref{eq:bcalc}, we can see that $B$ is inversely proportional to $\mathbb{E}(\boldsymbol{w})$. Therefore, when $y$ increases, the weights become more concentrated on a few particles. 
Consequently, achieving a desirable error $\epsilon$ requires increasing the number of iterations $B$ as $y$ is increased in the sampled distribution. Although we have not directly evaluated the impact of $B$, we can see that the number of iterations $B$ affects the execution time of all algorithms. The speedup section of Fig. \ref{fig:rvResults} for an increasing $y$ (corresponding to an increasing $B$ value) illustrates that the impact from increasing iterations is the lowest on the proposed Megopolis and C1-PS128 algorithms. Thus we can expect Megopolis to be robust to changes in particle distributions and always generate higher quality results (seen with Megopolis generating the lowest MSE results for all $y$ distribution parameters) for a chosen number of iterations with significant or comparable speedup to Metropolis, C1 and C2 variants. 
Notably, C1 results demonstrate an increasing MSE and bias contribution as $y$ increases. This is due to the particles with large weights not being adequately exposed to random selections. In contrast, C2 shows a lower MSE than C1 and a constant bias contribution for small and large partition sizes. The decreasing MSE observed in C2 compared to C1 can be attributed to particles being exposed to more partitions as $B$ increases.

Interestingly, although the MSE of Megopolis is the lowest compared to all other algorithms, the MSE also increases with $y$.  However, the bias contribution remains constant. This suggests that the variance in the output sequence generated is increasing with $y$. As discussed in Section.~\ref{sec:mse-bias-results}, the range of offspring a particle can have in Megopolis is $0$ to $B$ while in Metropolis, the range is $0$ to $N$. As $B$ is increased to resample distributions with an increasing $y$, the particle offspring range, and consequently, the variance in the output sequence increases with $y$. In contrast, Metropolis maintains a constant, albeit higher, MSE. This can be attributed to the $0$ to $N$ particle offspring range unaffected by an increasing $B$. Importantly, the highest MSE achieved by Megopolis is still lower than the Metropolis algorithm under the shift in weights generated by an increasing $y$ parameter.

\subsection{Effect of C1 and C2 Partition Size Selection}

As reported~\cite{dulger_memory_2018}, we also observe that as the chosen C1 and C2 partition size approaches the total number of particles, the algorithms behave closer to Metropolis, improving both bias and MSE. However, increasing the partition size impacts the execution time of C1 and C2. An increased partition size requires an increased number of transactions to read a given partition from memory. More memory transactions result in increased execution time of the algorithms. Moreover, as the partition sizes are increased, the MSE results of C1 and C2 approach that of Metropolis. Therefore, we opted to compare the MSE and execution time of Megopolis against C1 and C2 with varying partition sizes in an effort to investigate the partition size at which both the C1 and C2 algorithms achieve an MSE closer to that of Megopolis. Subsequently, we compared the execution times of C1 and C2 at these \textit{optimal} partition sizes.

We evaluated C1 and C2 on the distribution from (\ref{eq:distribution}) with an exhaustive combination of the $y$ parameter chosen from $0, 1, 2, 3, 4$ and the number of particles in the range $2^{14}$ to $2^{22}$. We ignore the number of particles lower than $2^{14}$ as the choice of algorithm does not significantly impact execution time when the number of particles required to estimate the distribution is small. We evaluate the C1 and C2 algorithms using five different partition sizes,  $128, 256, 512, 1024, 2048$. The comprehensive set of results for all the experiments are in Appendix~\ref{appendix:b}. Fig. \ref{fig:BiasRuntime} presents a summary comparison of the MSE and execution time of Megopolis with respect to C1 and C2 with varying partition sizes when the number of particles is $2^{22}$, and the $y$ parameter is $4$. We choose a $y$ parameter of $4$ to demonstrate the resampling quality when the weight distribution is highly concentrated on a small number of particles, simulating the particle degeneracy problem.

\begin{figure*}
    \centering
    \includegraphics[scale=0.65]{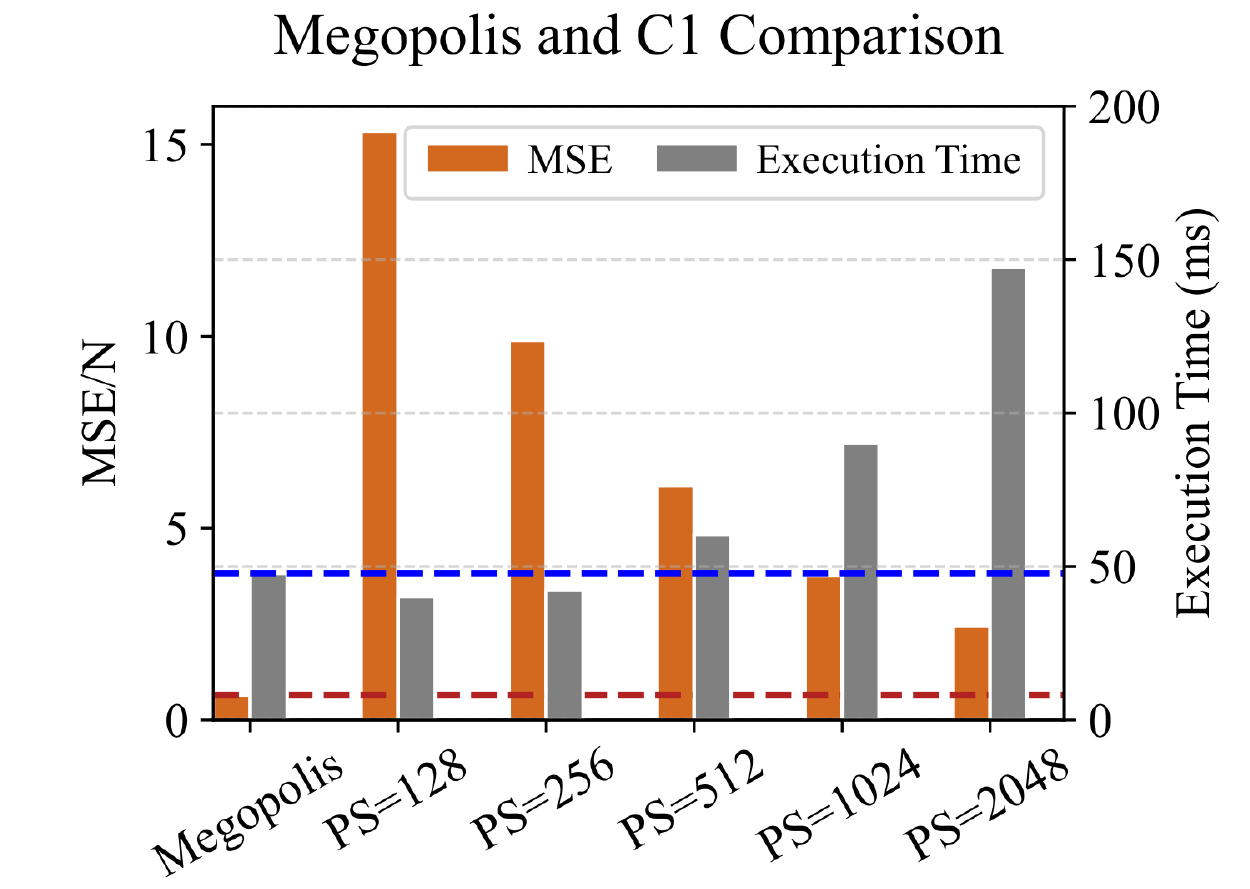}\hspace{0.4cm}
    \includegraphics[scale=0.65]{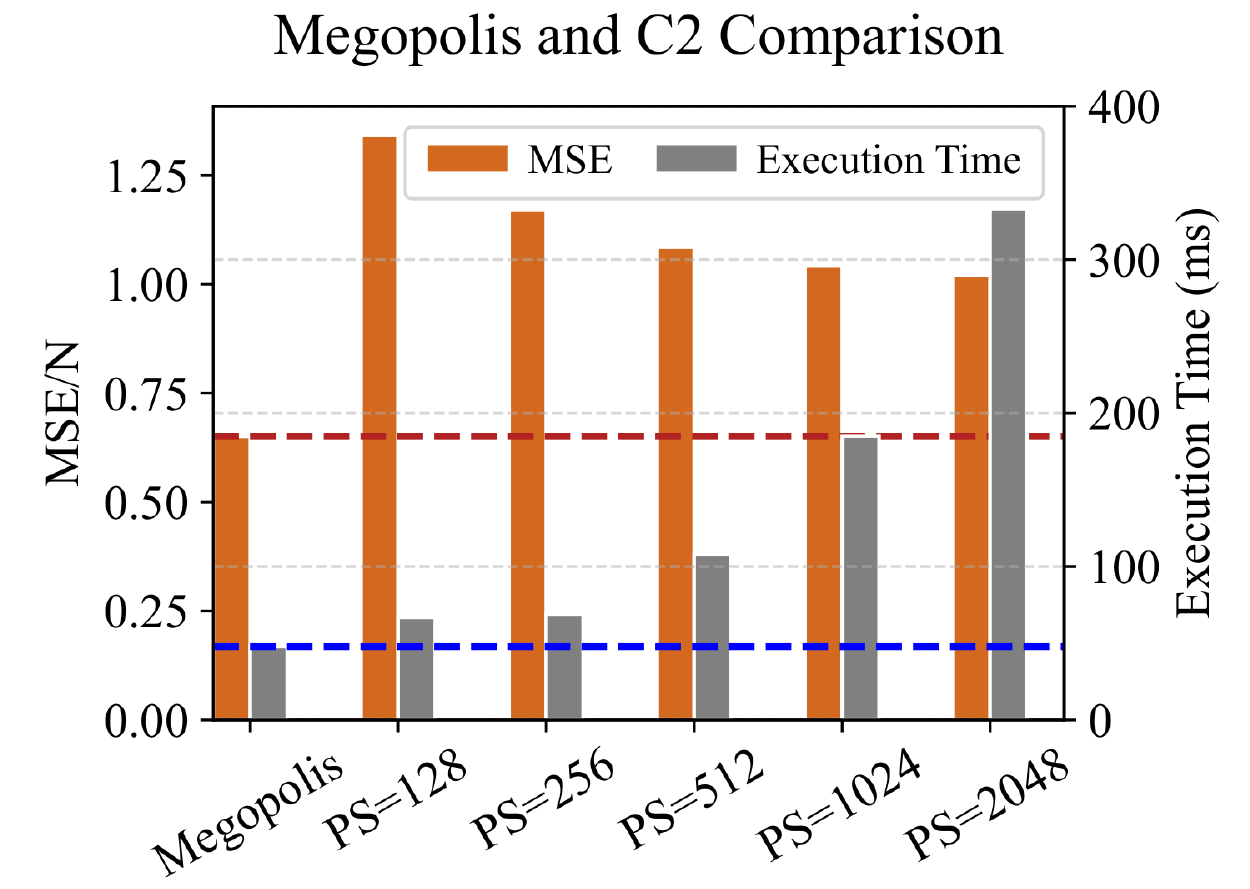}
    \caption{A comparison of the MSE and execution time of Megopolis with C1 and C2 with varying partition sizes when the number of particles is $2^{22}$ and the weights are sampled from the distribution from \eqref{eq:distribution} with $y = 4$. The dashed red and blue lines represent the MSE and execution time of Megopolis respectively.}
    \vspace{-0.5cm}
    \label{fig:BiasRuntime}
\end{figure*}

The results in Fig. \ref{fig:BiasRuntime} demonstrate that for any chosen partition size, resampling with Megopolis yields a lower MSE than C1 and C2. We also observe that Megopolis performs faster than C2 in all cases. The execution times for C1 is slightly lower for the smaller partition sizes of 128 or 256. However, with smaller partition sizes, we observe the artefacts of the C1 and C2 algorithms producing significantly higher MSE, resulting in lower quality resampling results. This experiment shows that C1 generates almost $15$ times increase in the MSE (see PS=128) compared to Megopolis. Consequently, we can observe that Megopolis will always produce a lower MSE and bias for a given execution time budget compared to C1 and C2; importantly, without needing to determine an appropriate partition size.

\subsection{Comparison with Prefix Sum Methods}\label{sec:unbiasedDiscussion}
\vspace{0.4cm}
We also compare Megopolis with the unbiased parallel resampling methods that require a prefix sum: i)~the parallel multinomial algorithm~\cite{murray_gpu_2012}; and ii)~the improved parallel systematic method~\cite{nicely2019improved} to provide a baseline for MSE and bias. From the results in  Fig.~\ref{fig:unbiasedComparison}, we see that Megopolis has a lower MSE than multinomial but a higher MSE than systematic. It is well-known that the systematic resampling method can reduce the variance of particle offspring over the multinomial method \cite{douc2005comparison}, hence, the MSE of the systematic method is lower than the multinomial method. Importantly, when comparing the bias contribution, the effect of numerical instabilities expected from the prefix sum can be observed as the bias contribution of both the multinomial and systematic methods increases with the number of particles. In contrast, the bias contribution of Megopolis is unaffected by the number of particles.

Comparing the execution time of Megopolis with multinomial and systematic in Fig.~\ref{fig:unbiasedComparison}, Megopolis achieves a significant speedup over both algorithms as the number of particles increases. Interestingly, we can observe the speedup Megopolis attains to be lower at higher $y$ values where the weights become highly concentrated on fewer particles. This observation is due to the number of iterations that Megopolis must perform, i.e., the complexity of Megopolis increasing with $y$ faster than the complexity of the multinomial and systematic algorithms for our choice of error $\epsilon = 0.01$--- see Equation~\eqref{eq:proposition}.

\begin{figure*}
    \centering
    \includegraphics[width=0.9\textwidth]{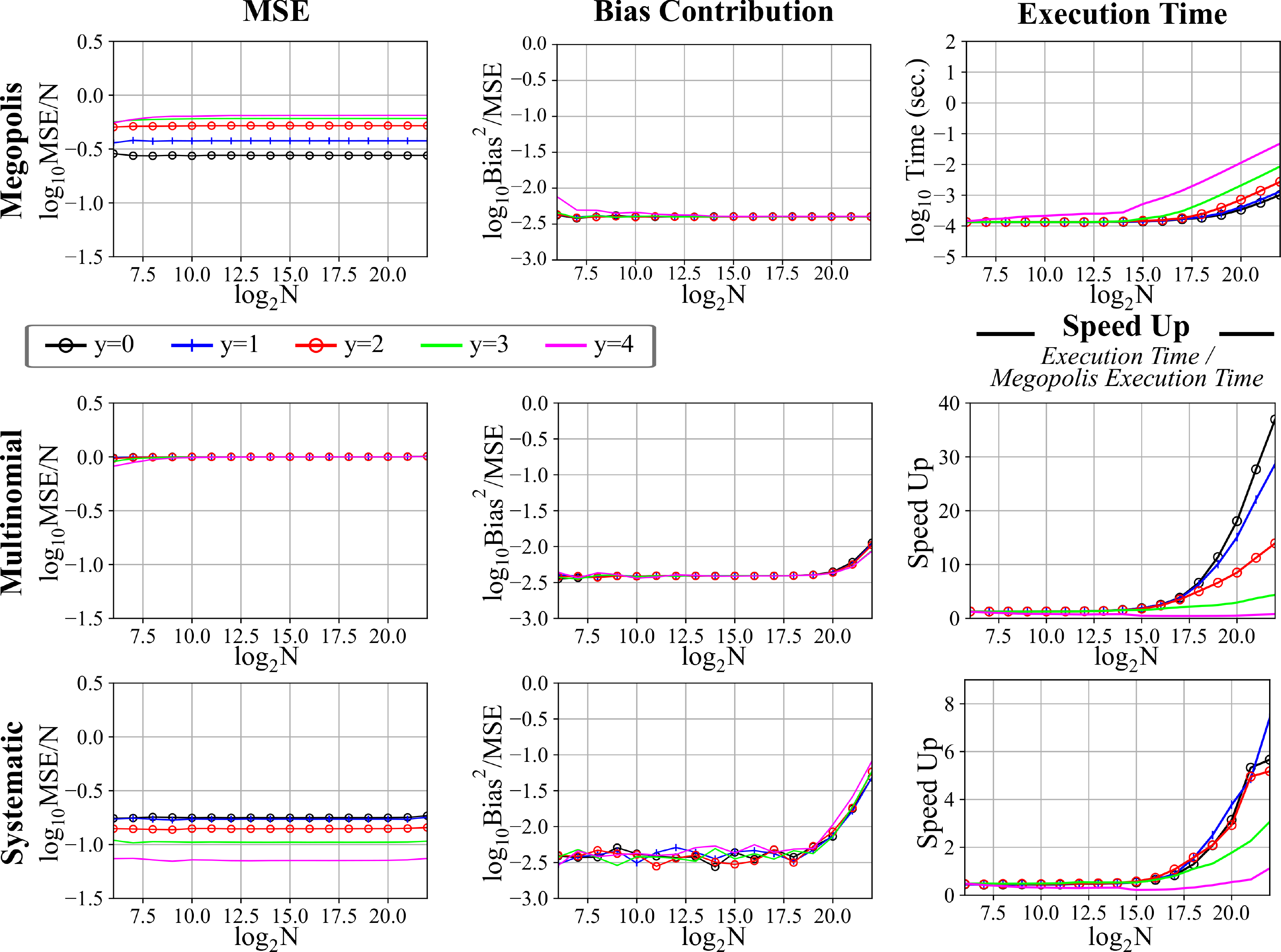}
    \caption{Comparison of experimental results of Megopolis, parallel Multinomial~\cite{murray_gpu_2012}, and  parallel Systematic~\cite{nicely2019improved} using the distribution from  \eqref{eq:distribution} to generate weights. The speedup graphs directly compare the execution time of a given method with Megopolis where $\text{Speed~Up} = Execution Time / Megopolis~ Execution~Time$.}
    \label{fig:unbiasedComparison}
\end{figure*}

\section{End-to-End Application Benchmark}~\label{sec:end_to_end_app}
We apply an SIR particle filter to a well known, highly non-linear  system~\cite{carlin1992a,gordon_novel_1993,kitagawa1996monte,arulampalam2002a} to benchmark the resampling algorithms in the context of a filtering application. For the SIR filter described in Section~\ref{sec:SIR}, the predict (propagation) and update equations for the non-linear system are as follows:
\begin{align}
    x_{t} &= \dfrac{x_{t-1}}{2} + 25 \dfrac{x_{t-1}}{1 + x^2_{t-1}} + 8 \cos{(1.2 t)} + v_{t-1}, \label{eq:predict-end2end}\\
    z_{t} &= \dfrac{x^2_t}{20} + n_t, \label{eq:update-end2end}
\end{align}
where $t$ is the time step; $x_t$ is the state; $z_t$ is the measurement; $v_{t-1}$ and $n_t$ are zero-mean Gaussian random variables with variance $o^2_v = 10$ and $o^2_n = 1$. For the SIR particle filter, we modify Algorithm \ref{alg:sirpf} to remove the weight normalisation step as the resampling algorithms used do not require normalised weights. As the output of the particles from resampling has uniform weighting, we can shift the estimation step to occur after resampling. In this case, we simply need to calculate the mean of the particles as our estimate. We use the efficient parallel reduction algorithm provided by the CUDA C++ \texttt{Thrust} library \cite{noauthor_cuda_nodate} to calculate the mean. The modified SIR particle filter is described in Algorithm \ref{alg:modsirpf} where $\hat{x}_{t}$ is the filter estimate at time step $t$, and the rest of the parameters are as described in Algorithm \ref{alg:sirpf}.  

\begin{algorithm}[!tb]
	\footnotesize
	\caption{Modified SIR Particle Filter}     \label{alg:modsirpf}
	\begin{algorithmic}[1] 
		\Statex \textbf{Input}: $[\boldsymbol{\bar{x}}_{t-1}, z_t]$
		\Statex \textbf{Output}: $[ \boldsymbol{\bar{x}}_t,\hat{x}_{t},]$
		\LineCommentNoIdent{\texttt{Stage 1:Prediction and Update}}
		\For {$i\gets 0$ to $N-1$} 
		    \State $x_t^{(i)} = f_{t-1}(\bar{x}^{(i)}_{t-1},v_{t-1})$ \Comment{prediction using \eqref{eq:prediction}}
		    \State $w^{(i)}_t = p(z_t | x_t^{(i)})$ \Comment{update using \eqref{eq:update}}
		\EndFor
		\LineCommentNoIdent{\texttt{Stage 2: Resample}}
		\State $\boldsymbol{\bar{x}}_t =$ RESAMPLE($[\boldsymbol{x}_t, \boldsymbol{w}_t]$)
		\LineCommentNoIdent{\texttt{Stage 3: Estimation}}
		\State $\hat{x}_{t} = \big(\sum_{i=0}^{N-1} \bar{x}_t^{(i)}\big)/N$.
	\end{algorithmic}
\end{algorithm}

\begin{figure*}
    \centering
    \begin{subfigure}[t]{\linewidth} 
        \centering
        \includegraphics[scale=1.0]{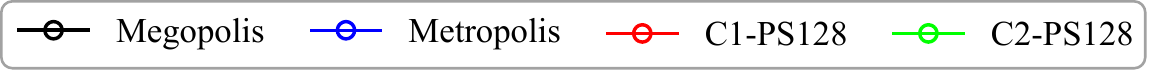}
    \end{subfigure}
    
    \begin{subfigure}[t]{0.28\linewidth}
        \includegraphics[scale=0.65]{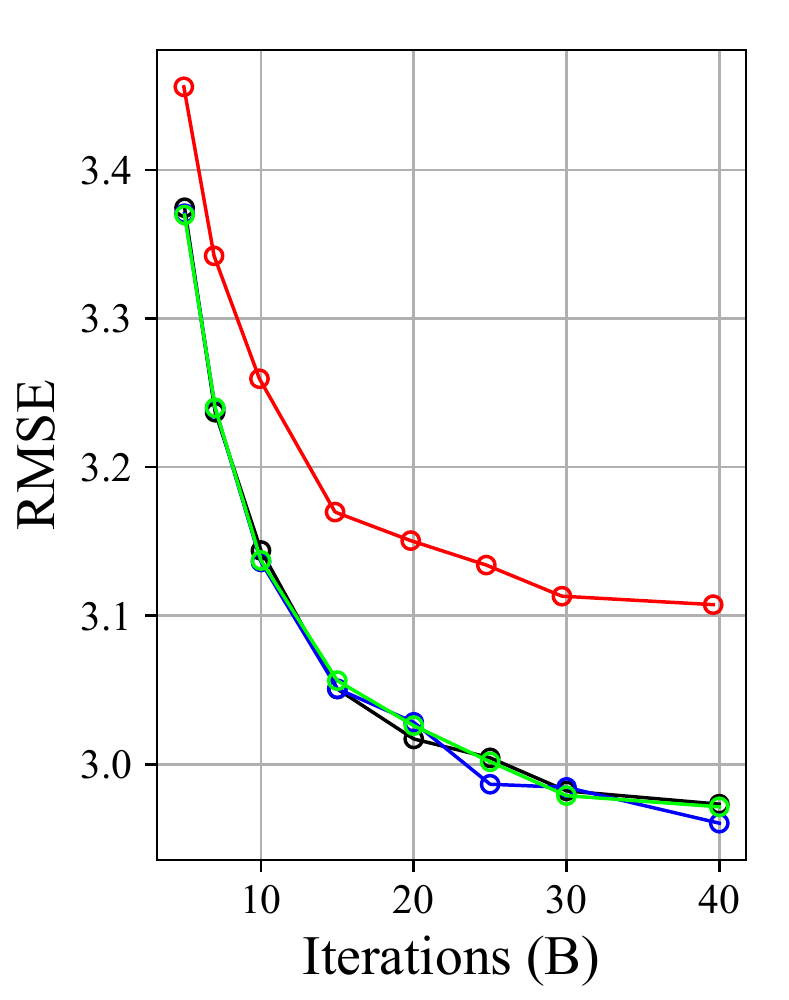}
        \caption{Mean RMSE results with varying $B$ parameter.}
        \label{fig:appRmse}
    \end{subfigure}
    \hspace{1cm}
    \begin{subfigure}[t]{0.28\linewidth}
        \includegraphics[scale=0.65]{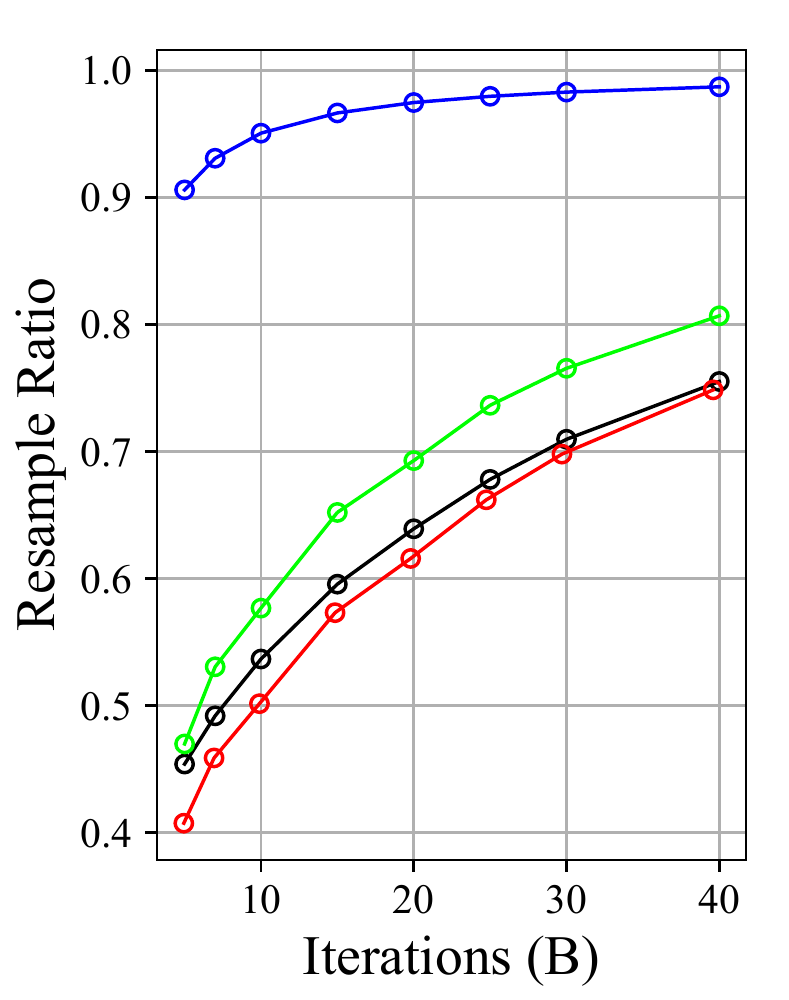}
        \caption{Resample ratio results with varying $B$ parameter.}
        \label{fig:appExecutionRatio}
    \end{subfigure}
    \hspace{1cm}
    \begin{subfigure}[t]{0.28\linewidth}
        \includegraphics[scale=0.65]{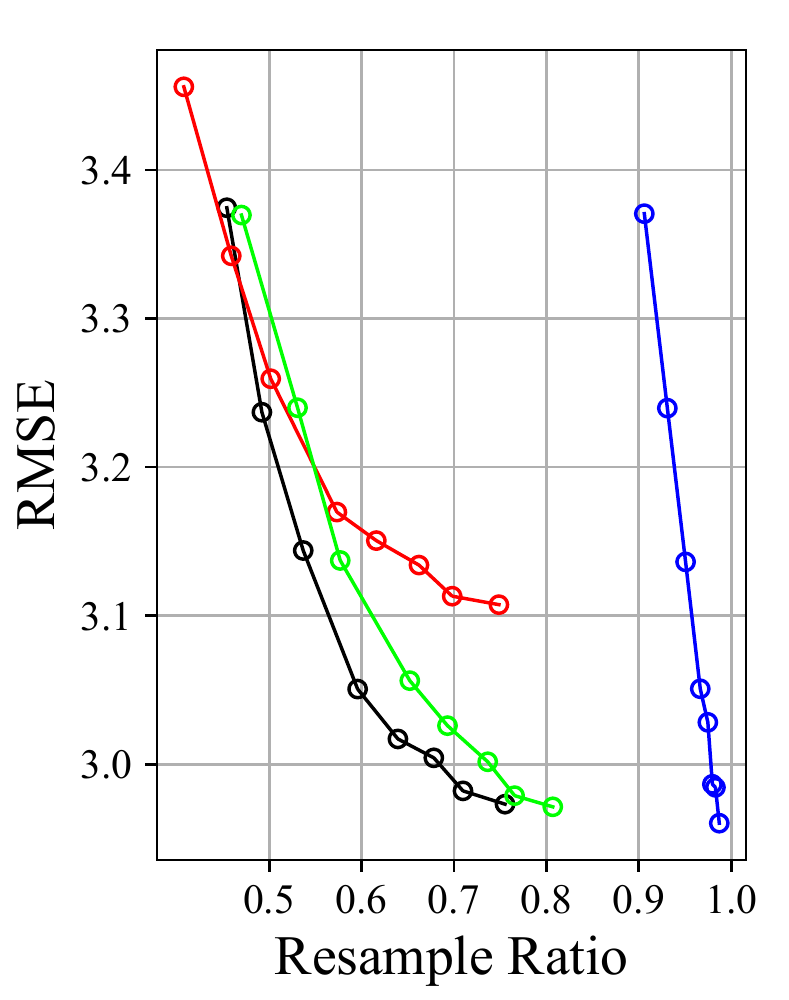}
        \caption{Mean RMSE results compared to the percentage of time spent resampling.}
        \label{fig:appBudget}
    \end{subfigure}

    \caption{End-to-end application results comparing mean RMSE, Resample Ratio, and number of iterations for Megopolis, Metropolis, C1-PS128, and C2-PS128.}
    \vspace{-0.5cm}
    \label{fig:appResults}
\end{figure*}

We generated 16 ground truth trajectories and, for each instance, performed 50 Monte Carlo experiments and executed the SIR filter over 100-time steps. At each time step, the error is calculated as the difference between the ground truth state and the estimated state. We use $2^{20}$ particles as the performance differences become more important with larger particle counts. To measure the quality of the estimation results generated from the resampling algorithms, we use the average root mean squared error (RMSE)~\cite{ristic_beyond_2003} of the generated trajectories, where RMSE is defined as follows: 
\begin{equation}
    \text{RMSE} = \dfrac{1}{T} \sum_{t=1}^T \sqrt{\dfrac{1}{K} \sum_{k=1}^K ||\hat{x}_t^{(k)} - x_t||^2},
\end{equation}
where $T=100$ is the number of time steps; $K=50$ is the number of Monte Carlo runs; $t$ is the time step; $x_t$ is the ground truth state at time step $t$; $\hat{x}_t^{(k)}$ is the estimated state at time step $t$ in the $k^{\text{th}}$ Monte Carlo run.

To assess time spent resampling, we compute the time spent in each stage of  Algorithm~\ref{alg:modsirpf}: i)~prediction and update time at stage $1$ ($\tau_{s_1}$); ii)~resample time at stage $2$ ($\tau_{s_2}$); and iii) estimation time at stage $3$ ($\tau_{s_3}$). We present the ratio of time spent resampling compared to the total execution time as:
\begin{equation}
    \text{Resample Ratio} = \dfrac{\tau_{s_2}}{\tau_{s_1} + \tau_{s_2} + \tau_{s_3}}.
    \label{eq:resampleRatio}
\end{equation}

We compare the results for Megopolis, Metropolis, C1-PS128, and C2-PS128 since  Megopolis demonstrates significant speedup over C1-PS2048 and C2-PS2048 while maintaining similar or better bias. As discussed earlier, choosing the number of resampling iterations $B$ for the algorithms significantly impacts execution time and resampling quality. In practical implementations, we want to avoid calculating $B$ at each resampling stage and employ a prior value chosen for the application context. Thus, we elect to evaluate the performance of the algorithms with varying $B$ parameters. First, we execute the application using each algorithm, calculating $B$ at runtime using \eqref{eq:bcalc} with an $\epsilon$ of $0.1$, from \cite{dulger_memory_2018}, \textit{before each resampling stage}. We choose a baseline for $B$ from the average of the calculated $B$ values. For this application, we calculated a baseline of $30$ iterations. Next, using this baseline, we evaluate the effects of varying $B$ by running the application for each algorithm with fixed $B$ values of $5, 7, 10, 15, 20, 25, 30, 40$. We focus on the effect of reducing $B$ as increasing $B$ provides diminishing improvements to RMSE, however, for completeness we include $B = 40$ to demonstrate this behaviour. We present the RMSE results in Fig.~\ref{fig:appRmse} and the resample ratio results in Fig.~\ref{fig:appExecutionRatio}. 

From the results in Fig.~\ref{fig:appRmse} and Fig.~\ref{fig:appExecutionRatio}, we can see that for any choice of $B$, C1-PS128 has the lowest resample ratio but a significantly higher RMSE value. Megopolis, Metropolis, and C2-PS128 have similar RMSE values, with Megopolis having the lowest resample ratio of the three. Importantly, we can see that the RMSE produced by C1-PS128 with a $B$ value of 25 is similar to the RMSE produced by Megopolis, Metropolis, and C2-PS128 with a $B$ value of 10.

Using the relationship between resample ratio and iterations $B$, we can construct a model for mapping resample ratio to RMSE for each algorithm. This is presented in Fig.~\ref{fig:appBudget}. Recall that the resample ratio is directly related to execution time; a lower resample ratio is equivalent to a lower execution time. Hence, this model shows the RMSE quality each algorithm can produce under \textit{varying execution time budget constraints}. From Fig.~\ref{fig:appBudget}, we can see that Megopolis produces the lowest RMSE out of all the algorithms for any given resample ratio. Consequently, under any chosen time constraint, these benchmark results demonstrate that the lowest RMSE value can be obtained by employing the Megopolis algorithm. 

In order to assess the impact of the bias exhibited by Megopolis, Metropolis, C1, and C2 on the RMSE, we compare these methods to the unbiased parallel multinomial algorithm~\cite{murray_gpu_2012} and improved parallel systematic~\cite{nicely2019improved}; both of which require a prefix sum. As the number of iterations $B$ affects the bias of the algorithms avoiding the prefix sum, we compare three choices of $B$, 16, 32, and 64 to provide a comparison with a range of biases.  We execute the application for each algorithm with $2^{20}$ particles. We selected $2^{20}$ particles to \textit{avoid} numerical instabilities produced by the prefix sum required for the unbiased methods. The resample ratio and RMSE results of the algorithms are presented in Table~\ref{tab:rmseTable}. The results confirm that our proposed Megopolis algorithm achieves a performance balance between the execution time (Resample Ratio) and accuracy (RMSE) without the need to evaluate and select a partition size, as in C1 and C2. Importantly, Megopolis achieves the best RMSE amongst the non-prefix sum methods; we can also observe that as the number of iterations $B$ is increased, Megopolis closely approaches the RMSE of unbiased resamplers (multinomial and systematic) whilst demonstrating a speedup over these unbiased methods.

\begin{table}[!hb]
\caption{End-to-end application results comparing resample ratio, and RMSE for the different resampling algorithms with $2^{20}$ particles. The best results out of Megopolis, Metropolis, C1-PS128, and C2-PS128 for each choice of $B$ are in \textbf{bold}, while the second-best results of those algorithms are in \secondbest{green}. }
\centering
\begin{tabular} {|c|l|c|c|c|c|}
\hline
$B$ & Type & Resample Ratio & RMSE \\
\hline
\hline
- & Multinomial \cite{murray_gpu_2012} & 0.925 & 2.944 \\
\hline
- & Systematic \cite{nicely2019improved} & 0.878 & 2.944 \\
\hline
\hline
\multirow{4}{*}{16} & Megopolis & \secondbest{0.603} & \textbf{3.039} \\
\cline{2-4}
& Metropolis & 0.968 & \secondbest{3.053} \\
& C1-PS128 & \textbf{0.573} & 3.170 \\ 
& C2-PS128 & 0.650 & \secondbest{3.053} \\
\hline
\hline
\multirow{4}{*}{32} & Megopolis & \secondbest{0.718} & \textbf{2.972} \\
\cline{2-4}
& Metropolis & 0.984 & \secondbest{2.978} \\
& C1-PS128 & \textbf{0.707} & 3.118  \\ 
& C2-PS128 & 0.776 & 2.985 \\
\hline
\hline
\multirow{4}{*}{64} & Megopolis & \secondbest{0.821} & \textbf{2.948} \\
\cline{2-4}
& Metropolis & 0.992 & \secondbest{2.956} \\
& C1-PS128 & \textbf{0.818} & 3.101  \\ 
& C2-PS128 & 0.867 & 2.957 \\
\hline
\end{tabular}
\label{tab:rmseTable}
\end{table}

\section{Conclusion and Discussion}~\label{sec:conclusion}
To improve coalesced memory access in the Metropolis resampling algorithm while also improving quality in output over other techniques, we developed the Megopolis algorithm. We proved that the number of iterations $B$ for our Megopolis algorithm is the same as for Metropolis to achieve the same error bound on the resampled distribution. The extensive experimental evidence demonstrates that the Megopolis algorithm always produces high-quality results with a lower MSE than the original Metropolis algorithm and both the C1 and C2 algorithms. The Megopolis algorithm is significantly faster than Metropolis as well as C1 and C2 when larger partition sizes are chosen to achieve lower MSE. Further, our extensive experimental results show that the execution time of the Megopolis algorithm is similar to the C1 and C2 algorithms, even when execution time is favoured over resampling quality by using smaller partition sizes. The smaller partition implementations for C1 achieve a marginal speedup over Megopolis at the cost of significantly increasing MSE and bias. However, through our end-to-end application benchmark, we demonstrate that by reducing the iterations for Megopolis to match the lower resampling quality of C1, Megopolis can achieve a speedup over C1. As for C2 with smaller partition implementations, Megopolis achieves a marginal speedup and improved MSE. Consequently, the Megopolis algorithm can be adapted to many application scenarios to provide both fast and high-quality resampling.

Although we have explored the application of our Megopolis resampler in the context of a SIR particle filter, in general, Megopolis can operate on weights (\eg, particle weights and mixture weights) that have not been normalised; an important feature for parallelisation shared with other Metropolis resampling algorithms. Notably, some sampling-based algorithms require resampling to be performed where the number of input samples is different from the number of output samples, \eg, the D-PMC~\cite{elvira2017population} method. To employ the Megopolis algorithm in this setting, the range of offsets in the Megopolis algorithm can be altered to fit the number of input samples. Further, other sampling-based algorithms require procedures that also pose challenging computational requirements, \eg, computing both the particle and mixture weights in the Improved Auxiliary Particle Filter (IAPF)~\cite{elvira2019elucidating}. We leave the evaluation of performance improvements the Megopolis algorithm can provide on a wider set of sampling-based algorithms and the exploration of methods to improve the performance of the other computationally burdensome procedures using massive parallelisation as directions to explore in future work.

Further, the cost of global memory transfers has been acknowledged by GPU manufacturers, and architectural changes have been made to lessen the impact. The performance of the Megopolis algorithm is greatly affected by global memory transfer speeds and would benefit from improvements to the size and rate of sequential memory transactions. On the other hand, improvements to random memory accesses would benefit Metropolis, C1, and C2 more than Megopolis. However, improvements to random memory accesses would likely only homogenise the execution time of the algorithms, and Megopolis can be expected to maintain an improved MSE over the other algorithms.

\section{Acknowledgement}
This work was supported by the Australian Research Council (LP160101177) 
and the Defence Science and Technology Group (DSTG), Australia.

\clearpage
\bibliographystyle{cas-model2-names}
\bibliography{references}
\clearpage
\bio{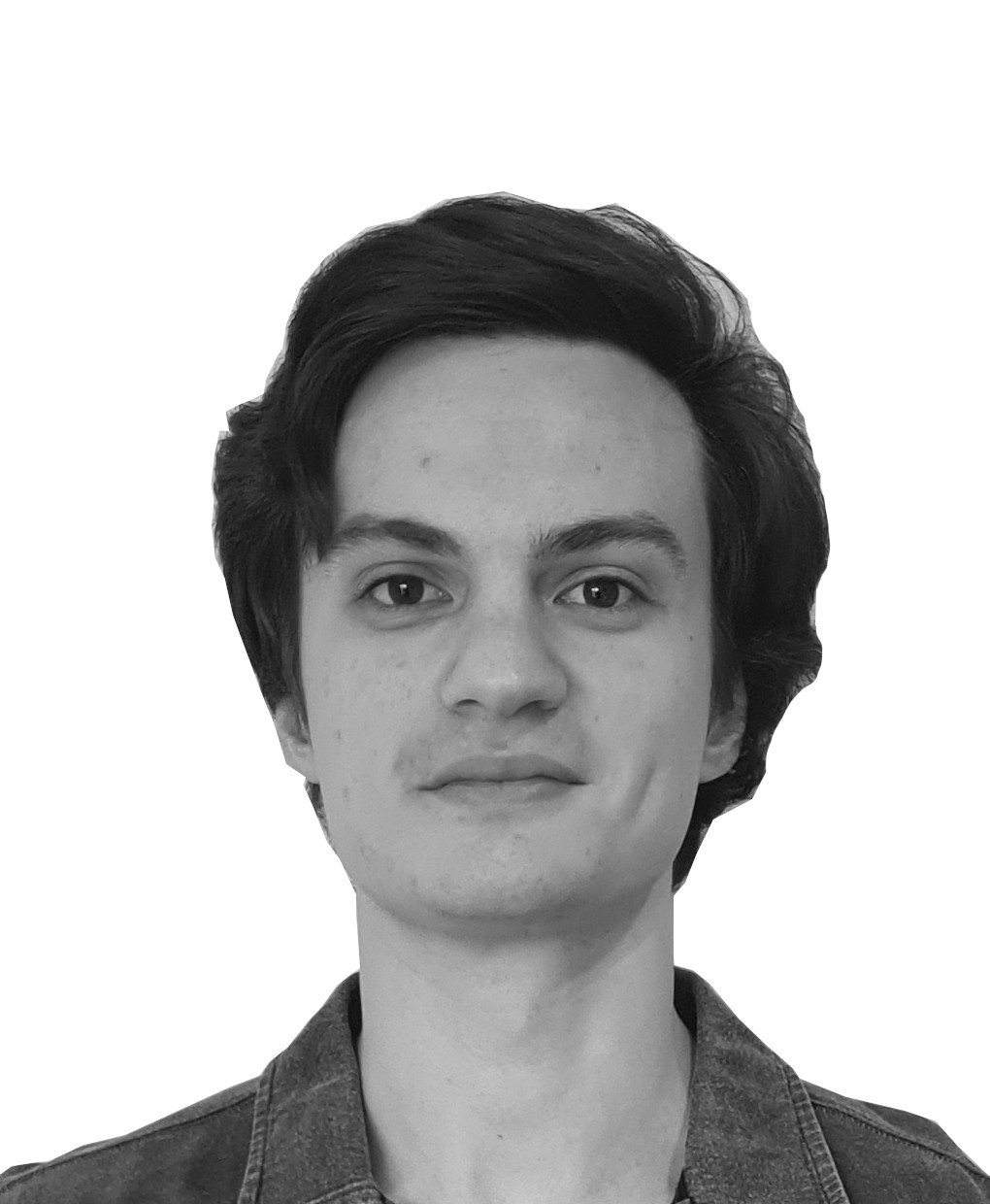}
\textbf{Joshua A. Chesser} received his B.Sc Advanced degree in Computer Science in 2020 from The University of Adelaide, Australia. He is currently pursuing his Honours degree in Computer Science and is a Research Associate with the School of Computer Science, The University of Adelaide. His research interests include robotics,  multi-object tracking and multi-sensor control. 
\endbio

%\vspace{0.3cm}

\bio{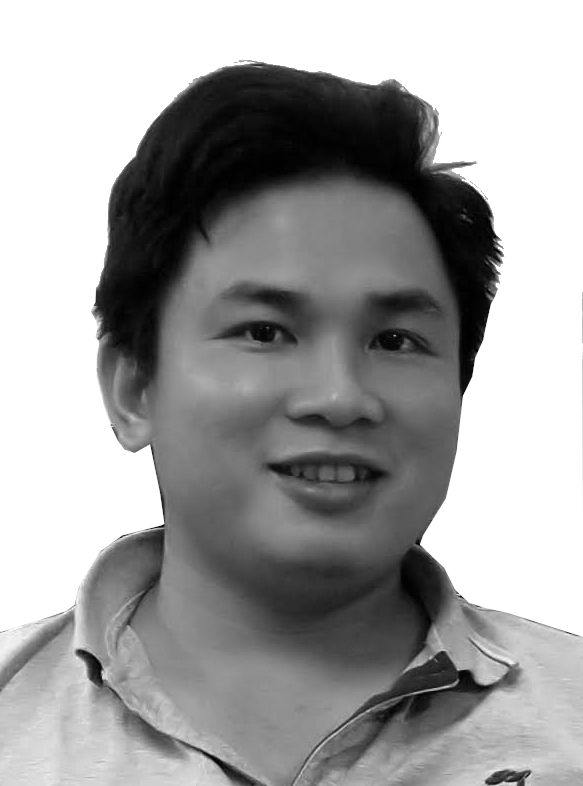}
\textbf{Hoa Van Nguyen} received his Bachelor degree in Electrical and Electronic Engineering from Portland State University, Oregon, the USA, in June 2012, and a PhD degree in Computer Science from The University of Adelaide, Australia, in July 2020. He is currently a Post-Doctoral Research Fellow with the School of Computer Science, The University of Adelaide. His research interests include signal processing, robotics, multi-object tracking and multi-sensor control.
\endbio

%\vspace{0.3cm}

\bio{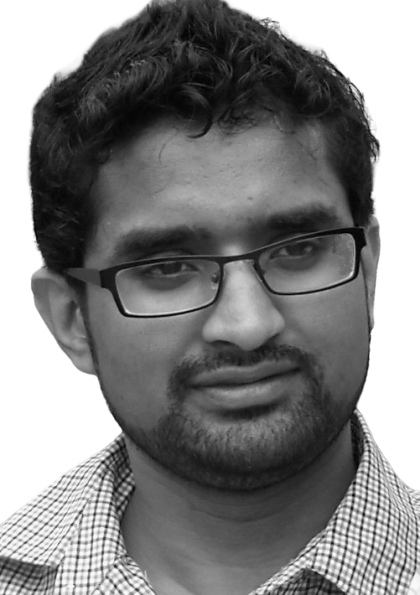}
\textbf{Damith C. Ranasinghe}  received the Ph.D. degree in Electrical and Electronic Engineering from The University of Adelaide, Australia, in December 2007. In the past, he was a Visiting Scholar with the Massachusetts Institute of Technology, Cambridge, MA, USA, and a Post-Doctoral Research Fellow with the University of Cambridge, Cambridge, U.K. Currently, he is an Associate Professor at The University of Adelaide. His research interests lie broadly in the areas of autonomous systems, machine learning, and systems security.
\endbio

\onecolumn
\clearpage
\appendix

\section{Appendix}
\label{appendix:a}
This appendix presents the comparison of experimental results of Megopolis, Metropolis, C1-PS128, C1-PS2048, C2-PS128, and C2-PS2048 when sampling from the gamma distribution to generate weight sequences.

\begin{figure}[b]
    \centering
    \includegraphics[width=0.8\textwidth]{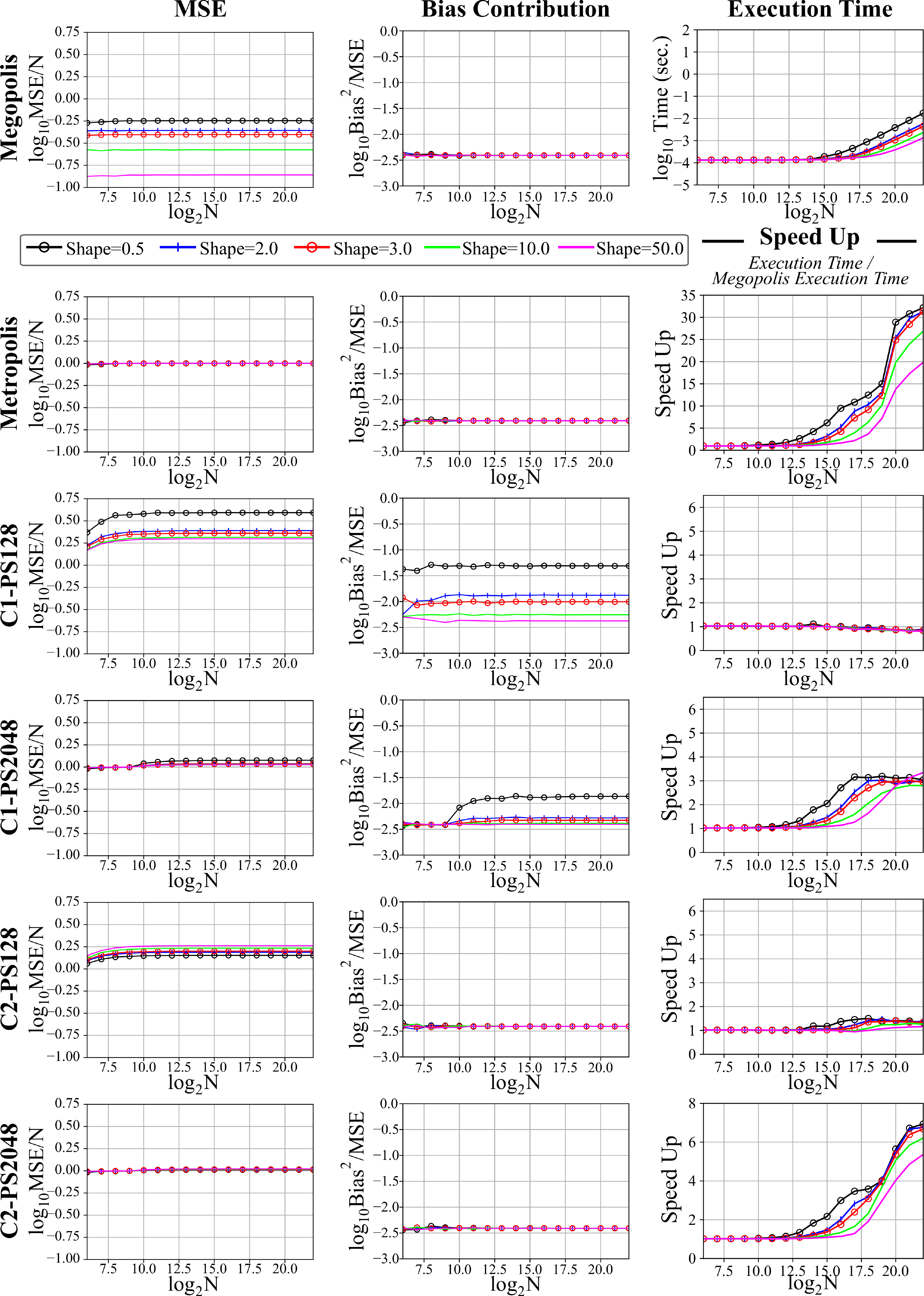}
    \caption{Comparison of experimental results of Megopolis, Metropolis, and C1 and C2 with partition sizes of 128 and 2048 bytes, using the gamma distribution to generate weights. The speedup graphs directly compare the execution time of a given method with Megopolis where $speedup = Execution~Time / Megopolis~ Execution~Time$. }
    \label{fig:gammaResults}
\end{figure}

\clearpage

\section{Appendix}
\label{appendix:unbiased}

\begin{algorithm}
	\footnotesize
	\caption{Multinomial Resample \cite{murray_gpu_2012}}     \label{alg:multinomialAlg}
	\begin{algorithmic}[1] 
		\Statex \textbf{Input}: $[\boldsymbol{x}_t, \boldsymbol{w}_t]$
		\Statex \textbf{Output}: $\bar{\boldsymbol{x}}_t$
		\State $\boldsymbol{\bar{w}}_t \gets \text{EXCLUSIVE\_PREFIX\_SUM}(\boldsymbol{w}_t)$
		
		\For {$i\gets 0$ to $N - 1$}
		    \State $u \sim \mathcal{U}[0,\boldsymbol{\bar{w}}_t^{(N-1)} + \boldsymbol{w}_t^{(N-1)}]$
		    \State binary search through $\boldsymbol{\bar{w}}_t$ to find $j$ such that $\boldsymbol{\bar{w}}_t^{(j)} \leq u < \boldsymbol{\bar{w}}_t^{(j+1)}$
		    \State $\bar{\boldsymbol{x}}_t^{(i)} \gets \boldsymbol{x}_t^{(j)}$
		\EndFor
		
	\end{algorithmic}
\end{algorithm}

\begin{algorithm}
	\footnotesize
	\caption{Improved Systematic Resample \cite{nicely2019improved}}     \label{alg:multinomialAlg}
	\begin{algorithmic}[1] 
		\Statex \textbf{Input}: $[\boldsymbol{x}_t, \boldsymbol{w}_t]$
		\Statex \textbf{Output}: $\bar{\boldsymbol{x}}_t$
		\State $\boldsymbol{\bar{w}}_t \gets \text{INCLUSIVE\_PREFIX\_SUM}(\boldsymbol{w}_t)$
		\State $u_t \gets \sim \mathcal{U}[0, 1]$
		
		\For {$i\gets 0$ to $N - 1$}
		    \State $u \gets (i + u_t) / N$
		    \State $m_t \gets \text{true}$ \Comment{thread bit mask}
		    \State $l_t \gets 0$
		    \State $a \gets i$ \Comment{ancestor index}
		    
		    \While{$m_t \neq \text{false}$}
		        \If{$i > (N - l_t)$}
		            \State $m_t \gets \text{false}$
		        \Else
		            \State $m_t \gets \boldsymbol{\bar{w}}_t^{(t + l_t)} < u$
                \EndIf		        
                \If{$m_t = \text{true}$}
                    \State $a \gets a + 1$
                \EndIf
                \State $l_t \gets l_t + 1$
		    \EndWhile \Comment{all thread bit masks must be false to exit}
		    
		    \State $l_t \gets 1$
		    
		    \While{$m_t \neq \text{true}$}
		        \If {$i < l_t$}
		            \State $m_t \gets \text{true}$
		        \Else
		            \State $m_t \gets \boldsymbol{\bar{w}}_t^{(t - l_t)} < u$
		        \EndIf
		        
		        \If {$m_t = \text{false}$}
		            $a \gets a - 1$
		        \EndIf
		        \State $l_t \gets l_t + 1$
		    \EndWhile \Comment{all thread bit masks must be true to exit}
		    \State $\bar{\boldsymbol{x}}_t^{(i)} \gets \boldsymbol{x}_t^{(a)}$
		\EndFor
		
	\end{algorithmic}
\end{algorithm}

\clearpage

\section{Appendix}
\label{appendix:b}

In this appendix, we summarise the bias and execution time results of each resampling algorithm on the distribution from (\ref{eq:distribution}), including Metropolis, Megopolis, and both Metropolis-C1 and Metropolis-C2 with partition sizes 128, 256, 512, 1024, and 2048 bytes in Tables 
\ref{tab:megopolisrv},
\ref{tab:metropolisrv},  \ref{tab:metropolisC1rv}, \ref{tab:metropolisC2rv}, respectively.

\begin{table*}
 \caption{The MSE and execution time results of the Megopolis algorithm on the distribution from \eqref{eq:distribution}. The results are displayed in the form "MSE/N, execution time" for each choice of $y$ and number of particles}
    \centering

\begin{tabular} {|c|c|c|c|c|c|}
\hline
\multicolumn{6}{|c|}{\textbf{Megopolis}} \\
\hline
\multirow{2}{*}{particles} & \multicolumn{5}{|c|}{$y$} \\
\cline{2-6}
& 0 & 1 & 2 & 3 & 4 \\
\hline
32768 & 0.2760, 1.396e-4 & 0.3769, 1.389e-4 & 0.5213, 1.475e-4 & 0.6067, 1.719e-4 & 0.6508, 5.170e-4 \\
\hline
65536 & 0.2753, 1.467e-4 & 0.3767, 1.496e-4 & 0.5214, 1.558e-4 & 0.6071, 2.086e-4 & 0.6507, 8.173e-4 \\
\hline
131072 & 0.2755, 1.652e-4 & 0.3767, 1.698e-4 & 0.5213, 1.798e-4 & 0.6072, 3.028e-4 & 0.6505, 0.0014 \\
\hline
262144 & 0.2756, 1.851e-4 & 0.3767, 1.943e-4 & 0.5214, 2.427e-4 & 0.6072, 5.410e-4 & 0.6510, 0.0027 \\
\hline
524288 & 0.2757, 2.292e-4 & 0.3768, 2.589e-4 & 0.5213, 3.928e-4 & 0.6070, 0.0010 & 0.6511, 0.0055 \\
\hline
1048576 & 0.2758, 3.377e-4 & 0.3768, 4.044e-4 & 0.5213, 7.156e-4 & 0.6071, 0.0021 & 0.6511, 0.0113 \\
\hline
2097152 & 0.2757, 5.643e-4 & 0.3768, 7.069e-4 & 0.5213, 0.0014 & 0.6071, 0.0042 & 0.6511, 0.0233 \\
\hline
4194304 & 0.2757, 0.0010 & 0.3767, 0.0013 & 0.5213, 0.0027 & 0.6071, 0.0086 & 0.6510, 0.0478 \\
\hline
\end{tabular}
   
    \label{tab:megopolisrv}
\end{table*}

\begin{table*}
    \centering
     \caption{The MSE and execution time results of the Metropolis algorithm on the distribution from \eqref{eq:distribution}. Each cell contains "MSE/N, execution time" for each choice of $y$ and number of particles}

\begin{tabular} {|c|c|c|c|c|c|}
\hline
\multicolumn{6}{|c|}{\textbf{Metropolis}} \\
\hline
\multirow{2}{*}{particles} & \multicolumn{5}{|c|}{$y$} \\
\cline{2-6}
& 0 & 1 & 2 & 3 & 4 \\
\hline
32768 & 0.9997, 1.620e-4 & 1.0001, 1.846e-4 & 1.0001, 3.013e-4 & 1.0001, 8.962e-4 & 0.9994, 0.0052 \\
\hline
65536 & 0.9998, 1.944e-4 & 1.0000, 2.406e-4 & 1.0000, 4.805e-4 & 1.0001, 0.0017 & 1.0000, 0.0102 \\
\hline
131072 & 0.9999, 2.719e-4 & 0.9999, 3.598e-4 & 1.0001, 8.597e-4 & 1.0000, 0.0032 & 0.9995, 0.0188 \\
\hline
262144 & 0.9999, 5.170e-4 & 1.0000, 6.696e-4 & 1.0000, 0.0018 & 1.0000, 0.0063 & 1.0002, 0.0386 \\
\hline
524288 & 0.9998, 0.0013 & 1.0000, 0.0017 & 1.0001, 0.0041 & 0.9999, 0.0150 & 1.0002, 0.0902 \\
\hline
1048576 & 0.9998, 0.0032 & 1.0000, 0.0050 & 1.0000, 0.0143 & 1.0001, 0.0556 & 1.0002, 0.3223 \\
\hline
2097152 & 0.9999, 0.0067 & 1.0000, 0.0109 & 1.0000, 0.0321 & 1.0001, 0.1264 & 1.0000, 0.7466 \\
\hline
4194304 & 0.9999, 0.0139 & 1.0000, 0.0227 & 1.0001, 0.0675 & 1.0001, 0.2672 & 1.0002, 1.5929 \\
\hline
\end{tabular}
    \label{tab:metropolisrv}
\end{table*}

\begin{table*}[!hb]
    \caption{The MSE and execution time results of the C1 algorithm on the distribution from \eqref{eq:distribution}. The results are displayed in the form "MSE/N, execution time" for each choice of $y$ and number of particles}
    \centering
\begin{tabular} {|c|c|c|c|c|c|}
\hline
\multicolumn{6}{|c|}{\textbf{Metropolis C1}} \\
\hline
\multirow{2}{*}{particles} & \multicolumn{5}{|c|}{$y$} \\
\cline{2-6}
& 0 & 1 & 2 & 3 & 4 \\
\hline
\multicolumn{6}{|c|}{Partition Size = 128} \\
\hline
32768 & 2.0902, 1.402e-4 & 2.3105, 1.407e-4 & 3.2380, 1.476e-4 & 6.2008, 1.738e-4 & 15.3599, 3.927e-4 \\
\hline
65536 & 2.0908, 1.429e-4 & 2.3090, 1.429e-4 & 3.2362, 1.521e-4 & 6.2169, 2.096e-4 & 15.4083, 7.143e-4 \\
\hline
131072 & 2.0906, 1.492e-4 & 2.3102, 1.496e-4 & 3.2421, 1.646e-4 & 6.2085, 2.915e-4 & 15.4024, 0.0014 \\
\hline
262144 & 2.0907, 1.713e-4 & 2.3113, 1.710e-4 & 3.2412, 2.126e-4 & 6.2096, 5.194e-4 & 15.4186, 0.0026 \\
\hline
524288 & 2.0910, 2.146e-4 & 2.3113, 2.116e-4 & 3.2410, 3.390e-4 & 6.2012, 9.563e-4 & 15.3472, 0.0051 \\
\hline
1048576 & 2.0908, 3.068e-4 & 2.3112, 3.153e-4 & 3.2410, 5.957e-4 & 6.2035, 0.0018 & 15.3344, 0.0101 \\
\hline
2097152 & 2.0910, 4.984e-4 & 2.3110, 5.321e-4 & 3.2413, 0.0011 & 6.2057, 0.0036 & 15.3497, 0.0202 \\
\hline
4194304 & 2.0910, 8.811e-4 & 2.3110, 9.659e-4 & 3.2413, 0.0021 & 6.2040, 0.0071 & 15.3533, 0.0402 \\
\hline
\multicolumn{6}{|c|}{Partition Size = 256} \\
\hline
32768 & 1.5457, 1.401e-4 & 1.6548, 1.406e-4 & 2.1225, 1.492e-4 & 3.7169, 1.795e-4 & 9.8313, 4.159e-4 \\
\hline
65536 & 1.5462, 1.451e-4 & 1.6564, 1.453e-4 & 2.1245, 1.568e-4 & 3.7268, 2.242e-4 & 9.9361, 7.820e-4 \\
\hline
131072 & 1.5458, 1.557e-4 & 1.6560, 1.565e-4 & 2.1222, 1.740e-4 & 3.7305, 3.253e-4 & 9.8162, 0.0014 \\
\hline
262144 & 1.5463, 1.793e-4 & 1.6561, 1.790e-4 & 2.1241, 2.308e-4 & 3.7353, 5.600e-4 & 9.9280, 0.0027 \\
\hline
524288 & 1.5463, 2.298e-4 & 1.6564, 2.318e-4 & 2.1233, 3.705e-4 & 3.7251, 0.0010 & 9.9072, 0.0054 \\
\hline
1048576 & 1.5463, 3.358e-4 & 1.6563, 3.481e-4 & 2.1237, 6.402e-4 & 3.7254, 0.0019 & 9.8904, 0.0106 \\
\hline
2097152 & 1.5465, 5.511e-4 & 1.6562, 5.849e-4 & 2.1238, 0.0012 & 3.7277, 0.0038 & 9.8972, 0.0212 \\
\hline
4194304 & 1.5464, 9.704e-4 & 1.6564, 0.0011 & 2.1235, 0.0022 & 3.7273, 0.0074 & 9.8994, 0.0423 \\
\hline
\multicolumn{6}{|c|}{Partition Size = 512} \\
\hline
32768 & 1.2729, 1.409e-4 & 1.3272, 1.421e-4 & 1.5576, 1.538e-4 & 2.3893, 2.035e-4 & 6.1284, 5.749e-4 \\
\hline
65536 & 1.2729, 1.480e-4 & 1.3278, 1.497e-4 & 1.5621, 1.673e-4 & 2.3934, 2.707e-4 & 6.0562, 0.0011 \\
\hline
131072 & 1.2732, 1.661e-4 & 1.3283, 1.684e-4 & 1.5620, 1.963e-4 & 2.3966, 4.177e-4 & 6.0780, 0.0020 \\
\hline
262144 & 1.2732, 1.958e-4 & 1.3280, 1.989e-4 & 1.5620, 2.792e-4 & 2.3944, 7.460e-4 & 6.1108, 0.0039 \\
\hline
524288 & 1.2735, 2.651e-4 & 1.3281, 2.714e-4 & 1.5621, 4.553e-4 & 2.3908, 0.0014 & 6.1052, 0.0077 \\
\hline
1048576 & 1.2732, 4.088e-4 & 1.3281, 4.227e-4 & 1.5626, 8.099e-4 & 2.3917, 0.0027 & 6.1060, 0.0152 \\
\hline
2097152 & 1.2733, 7.043e-4 & 1.3284, 7.252e-4 & 1.5623, 0.0015 & 2.3909, 0.0052 & 6.0944, 0.0302 \\
\hline
4194304 & 1.2733, 0.0013 & 1.3283, 0.0013 & 1.5622, 0.0029 & 2.3916, 0.0103 & 6.0995, 0.0602 \\
\hline
\multicolumn{6}{|c|}{Partition Size = 1024} \\
\hline
32768 & 1.1358, 1.426e-4 & 1.1636, 1.453e-4 & 1.2795, 1.625e-4 & 1.7024, 2.428e-4 & 3.7104, 8.112e-4 \\
\hline
65536 & 1.1361, 1.527e-4 & 1.1636, 1.562e-4 & 1.2802, 1.874e-4 & 1.6978, 3.485e-4 & 3.7609, 0.0016 \\
\hline
131072 & 1.1363, 1.781e-4 & 1.1637, 1.857e-4 & 1.2810, 2.408e-4 & 1.7000, 5.817e-4 & 3.7896, 0.0029 \\
\hline
262144 & 1.1366, 2.257e-4 & 1.1641, 2.386e-4 & 1.2815, 3.712e-4 & 1.7005, 0.0011 & 3.7432, 0.0057 \\
\hline
524288 & 1.1365, 3.319e-4 & 1.1641, 3.597e-4 & 1.2811, 6.316e-4 & 1.7018, 0.0020 & 3.7619, 0.0113 \\
\hline
1048576 & 1.1367, 5.618e-4 & 1.1642, 6.137e-4 & 1.2812, 0.0012 & 1.7012, 0.0039 & 3.7638, 0.0226 \\
\hline
2097152 & 1.1367, 0.0010 & 1.1642, 0.0011 & 1.2812, 0.0022 & 1.7021, 0.0078 & 3.7634, 0.0452 \\
\hline
4194304 & 1.1366, 0.0020 & 1.1642, 0.0022 & 1.2813, 0.0043 & 1.7018, 0.0155 & 3.7636, 0.0903 \\
\hline
\multicolumn{6}{|c|}{Partition Size = 2048} \\
\hline
32768 & 1.0675, 1.467e-4 & 1.0804, 1.523e-4 & 1.1380, 1.812e-4 & 1.3458, 3.162e-4 & 2.4094, 0.0013 \\
\hline
65536 & 1.0678, 1.594e-4 & 1.0811, 1.681e-4 & 1.1400, 2.234e-4 & 1.3525, 4.942e-4 & 2.4772, 0.0026 \\
\hline
131072 & 1.0680, 1.957e-4 & 1.0819, 2.112e-4 & 1.1403, 3.185e-4 & 1.3505, 8.859e-4 & 2.4354, 0.0047 \\
\hline
262144 & 1.0682, 2.760e-4 & 1.0820, 3.077e-4 & 1.1406, 5.289e-4 & 1.3523, 0.0017 & 2.4431, 0.0093 \\
\hline
524288 & 1.0684, 4.674e-4 & 1.0819, 5.305e-4 & 1.1406, 9.671e-4 & 1.3524, 0.0032 & 2.4417, 0.0184 \\
\hline
1048576 & 1.0682, 9.037e-4 & 1.0819, 0.0010 & 1.1405, 0.0018 & 1.3530, 0.0064 & 2.4472, 0.0368 \\
\hline
2097152 & 1.0682, 0.0018 & 1.0821, 0.0021 & 1.1406, 0.0036 & 1.3521, 0.0127 & 2.4476, 0.0737 \\
\hline
4194304 & 1.0682, 0.0036 & 1.0821, 0.0043 & 1.1406, 0.0071 & 1.3522, 0.0253 & 2.4407, 0.1475 \\
\hline
\end{tabular}
    
    \label{tab:metropolisC1rv}
\end{table*}

\begin{table*}[!hb]
\caption{The MSE and execution time results of the C2 algorithm on the distribution from \eqref{eq:distribution}. The results are displayed in the form "MSE/N, execution time" for each choice of $y$ and number of particles}
    \centering
\begin{tabular} {|c|c|c|c|c|c|}
\hline
\multicolumn{6}{|c|}{\textbf{Metropolis C2}} \\
\hline
\multirow{2}{*}{particles} & \multicolumn{5}{|c|}{$y$} \\
\cline{2-6}
& 0 & 1 & 2 & 3 & 4 \\
\hline
\multicolumn{6}{|c|}{Partition Size = 128} \\
\hline
32768 & 1.7029, 1.398e-4 & 1.6066, 1.398e-4 & 1.4663, 1.475e-4 & 1.3841, 1.967e-4 & 1.3396, 5.845e-4 \\
\hline
65536 & 1.7027, 1.461e-4 & 1.6061, 1.465e-4 & 1.4668, 1.568e-4 & 1.3839, 2.696e-4 & 1.3410, 0.0011 \\
\hline
131072 & 1.7031, 1.583e-4 & 1.6065, 1.604e-4 & 1.4673, 1.825e-4 & 1.3839, 4.263e-4 & 1.3416, 0.0022 \\
\hline
262144 & 1.7033, 1.852e-4 & 1.6063, 1.927e-4 & 1.4672, 2.782e-4 & 1.3841, 7.903e-4 & 1.3420, 0.0042 \\
\hline
524288 & 1.7034, 2.458e-4 & 1.6066, 2.675e-4 & 1.4673, 4.933e-4 & 1.3841, 0.0016 & 1.3413, 0.0084 \\
\hline
1048576 & 1.7034, 3.820e-4 & 1.6066, 4.325e-4 & 1.4670, 9.116e-4 & 1.3842, 0.0030 & 1.3413, 0.0168 \\
\hline
2097152 & 1.7035, 6.622e-4 & 1.6065, 7.724e-4 & 1.4672, 0.0017 & 1.3842, 0.0058 & 1.3414, 0.0334 \\
\hline
4194304 & 1.7034, 0.0012 & 1.6067, 0.0014 & 1.4673, 0.0033 & 1.3842, 0.0116 & 1.3414, 0.0668 \\
\hline
\multicolumn{6}{|c|}{Partition Size = 256} \\
\hline
32768 & 1.3508, 1.399e-4 & 1.3017, 1.406e-4 & 1.2322, 1.498e-4 & 1.1908, 2.019e-4 & 1.1696, 6.062e-4 \\
\hline
65536 & 1.3513, 1.473e-4 & 1.3021, 1.486e-4 & 1.2325, 1.615e-4 & 1.1906, 2.759e-4 & 1.1700, 0.0012 \\
\hline
131072 & 1.3514, 1.643e-4 & 1.3021, 1.695e-4 & 1.2326, 1.930e-4 & 1.1911, 4.430e-4 & 1.1700, 0.0022 \\
\hline
262144 & 1.3509, 1.998e-4 & 1.3026, 2.149e-4 & 1.2328, 2.997e-4 & 1.1913, 8.083e-4 & 1.1698, 0.0043 \\
\hline
524288 & 1.3512, 2.780e-4 & 1.3026, 3.168e-4 & 1.2327, 5.355e-4 & 1.1913, 0.0016 & 1.1701, 0.0086 \\
\hline
1048576 & 1.3511, 4.479e-4 & 1.3025, 5.361e-4 & 1.2329, 0.0010 & 1.1913, 0.0031 & 1.1699, 0.0172 \\
\hline
2097152 & 1.3512, 7.931e-4 & 1.3027, 9.732e-4 & 1.2328, 0.0020 & 1.1913, 0.0061 & 1.1699, 0.0344 \\
\hline
4194304 & 1.3511, 0.0015 & 1.3025, 0.0019 & 1.2328, 0.0039 & 1.1913, 0.0123 & 1.1700, 0.0689 \\
\hline
\multicolumn{6}{|c|}{Partition Size = 512} \\
\hline
32768 & 1.1741, 1.406e-4 & 1.1506, 1.421e-4 & 1.1158, 1.545e-4 & 1.0941, 2.133e-4 & 1.0835, 6.561e-4 \\
\hline
65536 & 1.1752, 1.497e-4 & 1.1508, 1.516e-4 & 1.1158, 1.722e-4 & 1.0949, 2.957e-4 & 1.0845, 0.0012 \\
\hline
131072 & 1.1749, 1.739e-4 & 1.1507, 1.817e-4 & 1.1161, 2.147e-4 & 1.0953, 4.776e-4 & 1.0837, 0.0023 \\
\hline
262144 & 1.1753, 2.244e-4 & 1.1509, 2.491e-4 & 1.1161, 3.449e-4 & 1.0952, 8.763e-4 & 1.0847, 0.0045 \\
\hline
524288 & 1.1752, 3.346e-4 & 1.1510, 4.006e-4 & 1.1161, 6.782e-4 & 1.0955, 0.0017 & 1.0848, 0.0090 \\
\hline
1048576 & 1.1753, 5.733e-4 & 1.1512, 7.311e-4 & 1.1161, 0.0014 & 1.0956, 0.0040 & 1.0846, 0.0205 \\
\hline
2097152 & 1.1753, 0.0011 & 1.1512, 0.0014 & 1.1162, 0.0030 & 1.0955, 0.0092 & 1.0848, 0.0494 \\
\hline
4194304 & 1.1754, 0.0020 & 1.1511, 0.0028 & 1.1162, 0.0061 & 1.0954, 0.0196 & 1.0849, 0.1077 \\
\hline
\multicolumn{6}{|c|}{Partition Size = 1024} \\
\hline
32768 & 1.0868, 1.426e-4 & 1.0749, 1.460e-4 & 1.0577, 1.657e-4 & 1.0474, 2.517e-4 & 1.0417, 8.630e-4 \\
\hline
65536 & 1.0872, 1.529e-4 & 1.0750, 1.585e-4 & 1.0580, 1.955e-4 & 1.0474, 3.695e-4 & 1.0426, 0.0017 \\
\hline
131072 & 1.0874, 1.872e-4 & 1.0754, 1.986e-4 & 1.0578, 2.617e-4 & 1.0476, 6.290e-4 & 1.0419, 0.0031 \\
\hline
262144 & 1.0874, 2.620e-4 & 1.0755, 3.021e-4 & 1.0581, 4.392e-4 & 1.0477, 0.0012 & 1.0423, 0.0061 \\
\hline
524288 & 1.0875, 4.250e-4 & 1.0754, 5.440e-4 & 1.0580, 9.832e-4 & 1.0476, 0.0025 & 1.0423, 0.0128 \\
\hline
1048576 & 1.0876, 7.775e-4 & 1.0755, 0.0011 & 1.0580, 0.0022 & 1.0476, 0.0068 & 1.0423, 0.0367 \\
\hline
2097152 & 1.0876, 0.0015 & 1.0755, 0.0021 & 1.0580, 0.0048 & 1.0477, 0.0156 & 1.0424, 0.0866 \\
\hline
4194304 & 1.0876, 0.0029 & 1.0755, 0.0042 & 1.0580, 0.0098 & 1.0477, 0.0330 & 1.0425, 0.1851 \\
\hline
\multicolumn{6}{|c|}{Partition Size = 2048} \\
\hline
32768 & 1.0433, 1.467e-4 & 1.0369, 1.525e-4 & 1.0287, 1.849e-4 & 1.0232, 3.346e-4 & 1.0196, 0.0014 \\
\hline
65536 & 1.0433, 1.604e-4 & 1.0376, 1.713e-4 & 1.0289, 2.383e-4 & 1.0235, 5.376e-4 & 1.0206, 0.0028 \\
\hline
131072 & 1.0434, 2.053e-4 & 1.0374, 2.206e-4 & 1.0290, 3.436e-4 & 1.0236, 9.811e-4 & 1.0210, 0.0052 \\
\hline
262144 & 1.0436, 3.185e-4 & 1.0376, 3.816e-4 & 1.0290, 6.051e-4 & 1.0238, 0.0019 & 1.0214, 0.0102 \\
\hline
524288 & 1.0437, 5.844e-4 & 1.0377, 7.747e-4 & 1.0290, 0.0015 & 1.0238, 0.0042 & 1.0211, 0.0221 \\
\hline
1048576 & 1.0437, 0.0012 & 1.0377, 0.0016 & 1.0290, 0.0037 & 1.0238, 0.0121 & 1.0211, 0.0677 \\
\hline
2097152 & 1.0437, 0.0023 & 1.0377, 0.0033 & 1.0290, 0.0079 & 1.0238, 0.0275 & 1.0213, 0.1570 \\
\hline
4194304 & 1.0437, 0.0046 & 1.0377, 0.0067 & 1.0291, 0.0164 & 1.0238, 0.0579 & 1.0213, 0.3331 \\
\hline
\end{tabular}
    \label{tab:metropolisC2rv}
\end{table*}

\end{document}